\newtheorem{theorem}{Theorem}
\newtheorem{corollary}{Corollary}
\newtheorem{definition}{Definition}
\newenvironment{proof}{\noindent{\bf Proof.}}{\hfill$\Box$} 
\newcommand{\xmark}{\ding{56}}%
\algrenewcommand{\ALG@beginalgorithmic}{\small}
\algrenewcommand\alglinenumber[1]{\small #1:}
\providecommand{\tup}[1]{%
    \relax\ifmmode
      \langle #1 \rangle%
    \else
        $\langle$#1$\rangle$%
    \fi
}
\newcommand{\act}[1]{%
    \relax\ifmmode
        \mathord{\mathcode`\-="702D\sf #1\mathcode`\-="2200}%
    \else
        $\mathord{\mathcode`\-="702D\sf #1\mathcode`\-="2200}$%
    \fi
}
\newcommand{\remove}[1]{}
\def\mainlistofsymbols{
  \normalsize
  \vspace*{1.5 em}
  \@starttoc{los}
}
\def\partonelistofsymbols{
  \normalsize
  \vspace*{1.5 em}
  \@starttoc{p1los}
}
\def\parttwolistofsymbols{
  \normalsize
  \vspace*{1.5 em}
  \@starttoc{p2los}
}
\def\l@symbol#1#2{\addpenalty{-\@highpenalty} \vskip 4pt plus 2pt
{\@dottedtocline{0}{0em}{8em}{#1}{#2}}}
\newcommand{\newhiddensym}[2]{%
}
\newcommand{\algIOA}[2]{\ifmmode{\text{#1}_{#2}}\else{$\text{#1}_{#2}$}\fi}
\newcommand{\EX}{\ifmmode{\xi}\else{$\xi$}\fi}
\newcommand{\EXF}{\ifmmode{\phi}\else{$\phi$}\fi}
\newcommand{\hist}[1]{H_{#1}}
\newcommand{\obj}[1]{O_{#1}}
\newcommand{\inter}[1]{
	\ifmmode{\left(\bigcap_{\mathcal{Q}\in#1}\mathcal{Q}\right)}
	\else{$\left(\bigcap_{\mathcal{Q}\in#1}\mathcal{Q}\right)$}
	\fi
}
\newcommand{\srvSet}{\mathcal{S}}
\newcommand{\ledger}{\mathcal{L}}
\newcommand{\op}{\pi}
\mathchardef\mhyphen="2D
\newcommand{\pr}{p}
\newcommand{\rdr}{r}
\newcommand{\bef}{\rightarrow}
\newcommand{\vid}[1]{\ifmmode{\nu_{#1}}\else{$\nu_{#1}$}\fi}
\newcommand{\seen}{\ifmmode{seen}\else{$seen$}\fi}
\newcommand{\maxts}[1]{\ifmmode{maxTS_{#1}}\else{$maxTS_{#1}$}\fi}
\newcommand{\maxtag}[1]{\ifmmode{maxTag_{#1}}\else{$maxTag_{#1}$}\fi}
\newcommand{\maxpair}[1]{\ifmmode{maxMPair_{#1}}\else{$maxMPair_{#1}$}\fi}
\newcommand{\mintag}[1]{\ifmmode{minTag_{#1}}\else{$minTag_{#1}$}\fi}
\newcommand{\maxps}{\ifmmode{maxPS}\else{$maxPS$}\fi}
\newcommand{\conftg}[1]{\ifmmode{confirmed_{#1}}\else{$confirmed_{#1}$}\fi}
\newcommand{\maxconftag}{\ifmmode{\ms{maxCT}}\else{$maxCT$}\fi}
\newcommand{\dledger}{\mathcal{DL}}
\newcommand{\mdledger}{\mathcal{MDL}}
\newcommand{\DLOA}{DLO$_{A}$\xspace}
\newcommand{\DLOB}{DLO$_{B}$\xspace}
\newcommand{\DLOX}{DLO$_{X}$\xspace}
\newcommand{\DLOBX}{DLO$_{\bar{X}}$\xspace}
\newcommand{\extends}{\Vert}
\definecolor{mygreen}{RGB}{0, 125, 125}
\newcommand{\nn}[1]{\textcolor{black}{#1}}
\newcommand{\cg}[1]{\textcolor{black}{#1}}
\newcommand{\af}[1]{\textcolor{black}{#1}}
\begin{document}

\title{Atomic Appends: Selling Cars and Coordinating Armies with Multiple Distributed Ledgers\thanks{Partially supported by the Spanish Ministry of Science, Innovation and Universities grant DiscoEdge (TIN2017-88749-R), the Regional Government of Madrid (CM) grant Cloud4BigData
	(S2013/ICE-2894) co-funded by FSE \& FEDER, the NSF of China grant
	61520106005, and by funds for the promotion of research at the University of Cyprus.}}


\author{
	Antonio Fern\'andez Anta\thanks{IMDEA Networks Institute, 
	Madrid, Spain. Email: antonio.fernandez@imdea.org}
\and 
	Chryssis Georgiou\thanks{Dept. of Computer Science, University of Cyprus,
	Nicosia, Cyprus. Email: chryssis@cs.ucy.ac.cy}
\and
    Nicolas Nicolaou\thanks{KIOS Research and Innovation CoE, University of Cyprus \& Algolysis Ltd, Cyprus. email: nicolasn@ucy.ac.cy}
}




 




\maketitle

\begin{abstract}
The various applications using Distributed Ledger Technologies (DLT) or blockchains, \af{have led}
to the introduction of a new ``marketplace'' where multiple types of digital assets may be exchanged. 
As each blockchain is designed to support 
\af{specific types of assets and transactions, and no blockchain will prevail,}
the need to perform \textit{interblockchain} transactions is 
\af{already pressing.}

\sloppy{In this work we examine the fundamental problem of interoperable and interconnected 
\af{blockchains.}}
In particular, we begin by \af{introducing}
the \textit{Multi-Distributed Ledger Objects} (MDLO), which 
is the result of aggregating multiple \textit{Distributed Ledger Objects} -- DLO (a DLO is a formalization of the blockchain) and that 
supports append and get operations of records (e.g., transactions) in them from multiple 
clients concurrently. Next we define the \textit{AtomicAppends} problem, which emerges
when the exchange of digital assets between multiple clients may involve \af{appending records in} more than one \af{DLO.}
Specifically, AtomicAppend 
requires that either \emph{all} 
\af{records}
will be appended on the involved 
\af{DLOs}
or \emph{none}. We examine the solvability of this 
problem \cg{assuming {\em rational and risk-averse} clients that may {\em fail by crashing}, and under different 
	client {\it utility} and {\it append} models, {\it timing models}, and  
	client \emph{failure scenarios}.}  
We show that for some cases the existence of a intermediary is {\em necessary} for the problem solution. We propose
the implementation of such intermediary over a specialized blockchain, we term \textit{Smart DLO} (SDLO), and we show 
how this can be used to solve the AtomicAppends problem \cg{even} in an asynchronous, client competitive  environment, where
all the clients may crash. 
\end{abstract}

\section{Introduction}
\label{sec:Intro}

\subsection{Motivation}

Blockchain systems, cryptocurrencies, and distributed ledger technology (DLT) in general, are becoming very popular and are expected to have a high impact in multiple aspects of our everyday life. In fact, there is a growing number of applications that use DLT to support their operations \cite{Zago2018}. However, there are many different blockchain systems, and new ones are proposed almost everyday. Hence, it is extremely unlikely that one single DLT or blockchain system will prevail. This is forcing the DLT community to accept that it is inevitable to come up with ways to make blockchains interconnect and interoperate.

\cg{The work in~\cite{FGKN_NETYS18} proposed a} formal definition of a reliable concurrent object, termed Distributed Ledger Object (DLO), which tries to convey the essential elements of blockchains. In particular, a DLO is a sequence of records, and has only two operations, \act{append} and \act{get}. The \act{append} operation is used to attach a new record at the end of the sequence, while the \act{get} operation returns the sequence.

In this work we initiate the study of systems formed by multiple DLOs that interact among each other. To do so, we define a basic problem involving two DLOs, that we call {\em the Atomic Append problem}. In this problem, two clients want to append new records in two DLOs, so that either both records are appended or none. The clients are assumed to be selfish, \cg{but  rational and risk-averse~\cite{osborne2004introduction}}, and may have different incentives for the different outcomes. 
Additionally, we assume that they may fail by crashing, which makes solving the problem more challenging. We observe that the problem cannot be solved in some system models and propose algorithms that solve it in others.

\subsection{Related Work}

The Atomic Append problem we describe above is very related to the multi-party fair exchange problem \cite{DBLP:conf/fc/FranklinT98}, in which several parties exchange commodities so that everyone gives an item away and receives an item in return. The proposed solutions for this problem rely on cryptographic techniques \cite{DBLP:conf/focs/MicaliRK03,DBLP:conf/fc/MukhamedovKR05} and are not designed for distributed ledgers. In this paper, as much as possible, we want to solve Atomic Appends on DLOs via their two operations \act{append} and \act{get}, without having to rely on cryptography or smart contracts.

Among the first problems identified involving the interconnection of blockchains was Atomic Cross-chain Swaps~\cite{DBLP:conf/podc/Herlihy18}, which can also be seen as a version of the fair exchange problem. In this case, two or more users want to exchange assets (usually cryptocurrency) in multiple blockchains. This problem can be solved by using escrows, hashlocks and timelocks: all assets are put in escrow until a value $x$ with a special hash $y=hash(x)$ is revealed or a certain time has passed. Only one of the users knows $x$, but as soon as she reveals it to claim her assets, everyone can use it to claim theirs. Observe that this solution assumes synchrony in the system.

This technique was originally proposed in on-line fora for two users \cite{Atomic-swap},
and it has been extensively adapted and used \cite{DBLP:books/daglib/0040621}. For instance, the Interledger system \cite{Interledger}
will use a generalization of atomic swaps to transfer (and exchange) currency in a network of blockchains and connectors, allowing any client of the system to interact with any other client. The Lighting network \cite{poon2016bitcoin,miller2017sprites} also allows transfers between any two clients via a network of micro-payment channels using a generalized atomic swap. Both Interledger and Lighting route and create one-to-one transfer paths in their respective networks. Herlihy \cite{DBLP:conf/podc/Herlihy18} has formalized and generalized atomic cross-chain swaps beyond one-to-one paths, and shows how multiple cross-chain swaps can be achieved if the transfers form a strongly connected directed graph.

Unlike in most blockchain systems, in Hyperledger Fabric \cite{DBLP:conf/eurosys/AndroulakiBBCCC18,DBLP:conf/esorics/AndroulakiCCK18} it is possible to have transactions that span several blockchains (blockchains are called \emph{channels} in Hyperledger Fabric). This allows solving the atomic cross-chain swap problem using a third trusted channel or a mechanism similar to a two-phase commit \cite{DBLP:conf/esorics/AndroulakiCCK18}. Additionally, these solutions do not require synchrony from the system. The ability of channels to access each other's state and interact is a very interesting feature of Hyperledger Fabric, very in line with the techniques we assume from advanced distributed ledgers in this paper. Unfortunately, they seem to be limited to the channels of a given Hyperledger Fabric deployment.

There are other blockchain systems under development that, like Hyperledger Fabric, will allow interactions between the different chains, presumably with many more operations than atomic swaps. Examples are Cosmos \cite{Cosmos}
or PolkaDot \cite{PolkaDot}.
These systems will have their own multi-chain technology, so only chains in a given deployment can initially interact, and other blockchain will be connected via gateways.

Another proposal for interconnection of blockchains is Tradecoin \cite{DBLP:journals/corr/abs-1805-05934}, whose target is to interconnect all blockchains by means of gateways, trying to reproduce the way Internet works. since the gateways will be clients of the blockchains, the functionality of the global interledger system will be limited by what can be done from the edge of the blockchains (i.e., by the blockchains' clients).

The practical need of blockchain systems to access the outside world to retrieve data (e.g., exchange rates, bank account balances) has been solved with the use of \emph{blockchain oracles}. These are relatively reliable sources of data that can be used inside a blockchain, typically in a smart contract. The weakest aspect of blockchain oracles is trust, since the outcome or actions of a smart contract will be as reliable as the data provided by the oracle. As of now, it seems there is no good solution for this trust problem, and blokchains have to rely on oracle services like Oraclize \cite{Oraclize}.

\subsection{Contributions}

As mentioned above, in this paper we extend \cg{the} study of the distributed ledger reliable concurrent object DLO started in \cite{FGKN_NETYS18} to systems formed of several such objects. Hence, the first contribution is the definition of the Multiple DLO (MDLO) system, as the aggregation of several DLOs (in similar way as a Distributed Shared Memory is the aggregation of multiple registers \cite{DBLP:books/daglib/0032304}).
The second contribution is the definition of a simple basic problem
in MDLO systems: the \emph{2-AtomicAppends problem.} In this problem, the objective is that two records belonging to two different clients are appended to two different DLOs atomically. Hence, either both records are appended or none is. Of course, this problem can be generalized in a natural way to the $k$-Atomic Appends problem, involving $k$ clients with $k$ records and up to $k$ DLOs.

Another contribution, in our view, is the introduction of a crash-prone risk-averse rational client model, which we believe is natural and practical, \cg{especially
	in the context of blockchains.} In this model, clients act selfishly trying to maximize their utility, but minimizing the risk of reducing it. We consider that this behavior is not a failure, but the nature of the client, and any algorithm proposed under this model (e.g., to solve the 2-AtomicAppends problem) must guarantee that clients will follow it, because their utility will be maximized without any risk. For a complete specification of the clients' rationality their utility function has to be provided. Two utility models are proposed. In the \emph{collaborative utility model}, both clients want the records to be appended over any other alternative. In the \emph{competitive utility model} a client still wants both records appended, but she prefers that only the other client appends.
This client model is complemented with the possibility that clients can fail by crashing.

We explore hence the solvability of 2-AtomicAppends in MDLO systems in which the DLOs are reliable but may be asynchronous, and the clients are rational but may fail by crashing. 
The first results we present consider a system model in which clients do not crash, and show that Collaborative 2-AtomicAppends can be solved even under asynchrony, while Competitive 2-AtomicAppends cannot be solved.
Then, we further study Collaborative 2-AtomicAppends if clients can crash. In the case that at most one of the two clients can crash, we show that, if each client must append its own record (what we call \emph{no delegation}), Collaborative 2-AtomicAppends cannot be solved even under synchrony. This justifies exploring the possibility of \emph{delegation:} any client can append any record, if she knows it. We show that in this case Collaborative 2-AtomicAppends can be solved, even if the system is asynchronous (termination is only guarantee under synchrony, though). However, delegation is not enough if both clients can crash, even under synchrony.
\cg{(See Table~\ref{tbl:summary} for an overview.)}

The negative results (for Competitive 2-AtomicAppends even without crash failures and for Collaborative 2-AtomicAppends with up to 2 crashes) justifies exploring alternatives to appending directly direct or delegating among clients. Hence, we propose the use of an entity, external to the clients, that coordinates the appends of the two records. In fact, this entity is a special DLO with some level of intelligence, which we hence call {\em Smart DLO} (SDLO). The SDLO is a reliable entity to which clients can delegate (via appending in the SDLO) the responsibility of appending their records to their respective DLOs when convenient. The SDLO hence collects all the records from the clients and appends them. Since the SDLO is reliable, all the appends will complete. If some record is missing, the SDLO issues no append, to guarantee the properties of the 2-AtomicAppends problem. Thus, the SDLO can be used to solve Competitive and Collaborative $k$-AtomicAppends even when all clients can crash. 

We believe that SDLO opens the door to a new type of interconnection and interoperability among DLOs and blockchains. While the use of oracles to access external information in a smart contract (maybe from another blockchain) is widely known, we are not familiar with blockchain systems in which one blochchain (i.e., possibly a smart contract) issues transactions in another blockchain. We believe this is a concept worth to be explored further.

The rest of the paper is structured as follows. The next section describes the model used and defines the AtomicAppends problem. Section~\ref{sec:nocrash} explores the 2-AtomicAppends problem when clients cannot crash. Section~\ref{nosdlo} studies the 2-AtomicAppends problem when clients can crash but SDLOs are not used. Section~\ref{sdlo} introduces the SDLO and shows how it solves the AtomicAppends problem.
Finally, Section~\ref{conclusion} presents conclusions and future work.

\remove{  
\paragraph{Composability.} 
Two DLOs $D_1$ and $D_2$ are {\em independent}\footnote{We could also call them {\em compatible}, but I want to use the symmetrical term of dependence for Interoperability, see below.}, if the \act{append}
operations of the one do not affect the records of the other. [[CG: This is
a bit unclear, but I think you understand what I am trying to say.]]\\ 

Then, given two independent DLOs $D_1=\langle S_{D_1},V_{D_1} \rangle$ and $D_2=\langle S_{D_2},V_{D_2}\rangle$ we define their {\em composition} $D_1\circ D_2$ to be the tuple $\langle S,V\rangle$ such that:
\begin{itemize}
	\item $S = S_{D_1} \dot{\cup} S_{D_2}$, where $\dot{\cup}$ is a union under the
	restriction that the order of each $S_{D_1}$ and $S_{D_2}$ is preserved in the
	projection $S|_{S_{D_1}}$ and $S|_{S_{D_2}}$, respectively, and
	\af{[[I do not see this, sorry. Is $S$ a sequence? ]]}
	\af{[[I see simpler to say that the composition is the set of the two ledgers]]}
	\item $V=V_{D_1} \cup V_{D_2}$. \af{[[Seems like it is also needed to restrict that the records in teh projectiosn are of the appropriate value space?]]}
\end{itemize} 

What we want to study is the {\em composability} of DLOs with respect to the consistency semantics they provide. In particular, given two independent  DLOs $D_1$ and $D_2$
of consistency type $T$, we would like to study whether their composition 
$D_1\circ D_2$ is also a DLO of consistency type $T$.\\

\cg{As an example of this is the registers that lead to a shared memory.
Each register object is independent from the other, in the sense that
writing a value to the one does not affect the other. Hence, this
definition of composability seems to match what has been done for shared
registers all these years.}\\ 

\cg{Intuition: because of the independence, it seems that all three consistency types 
of DLOs should be preserved under composability.} 
\af{[[I do not see this. In registers, sequential consistency is not preserved when moving from one register to two.]]}

\paragraph{Interoperability.} \cg{Now we want to deal with DLOs that are {\em dependent}. In particular, given two DLOs $D_1$ and $D_2$, we say that
$D_1$ is dependent on $D_2$, $D_1\leftharpoonup D_2$, if an \act{append}
operation of $D_1$ triggers an \act{append} of $D_2$. (The relation could also
be reverse but it is not necessary -- in this case we should say that 
$D_1\leftharpoonup D_2$ and $D_2\leftharpoonup D_1$, and denote it by $D_1\	\leftrightarrow D_2$.)}\\
\af{[[Is this a property of the ledgers or of their use?]]}

\cg{Given two dependent DLOs $D_1$ and $D_2$, we say that an implementation
is {\em interoperable}, if every \act{append} of $D_1$ commits atomically
with the corresponding \act{append} operation of $D_2$ (the one that was
triggered).}\\
\af{[[Seems like you assume a causal relation. However in the examples we consider the two operations are at the same level.]]}
 
\cg{As an example, consider the DLO $C$ to be a blockchain that
	contains owners of cars. And consider DLO $M$ be a blockchain of
	people's money.}

\cg{These two DLOs could be operate independently. Say that person
	$x$ gives for free his car to person $y$. In this case, a transaction
	that removes the car from $x$ and gives it to $y$ takes place 
	in DLO $C$. However, DLO $M$ is not affected. Similarly, if person
	$x$ gives as a gift money to person $y$, then this transaction is
	recorded in DLO $M$, but DLO $C$ is not affected (obviously).
	However, now consider the case that person $x$ {\em sells} the 
	car to person $y$. Now, both DLOs are affected. In fact, we have
	$C\leftharpoonup M$. We want, the transfer of the car in
	$C$ and the transfer of money in $M$ to take place atomically.}\\
	
\af{[[I am not sure composability and Interoperability are oposites as presented. Going back to registers, one can have a shared memory with consistency type T formed of registers. On top of it one may want to perform atomically several read and write operations (transactions).]]\\}

\noindent\cg{{\bf Implementation idea:} An idea to implement interoperable multi DLOs
is as follows. Consider an interconnecting DLO $CM$. An \act{append} operation 
on this DLO, trigers two \act{append} operations. One in $C$ and one in $M$. These
operations take place in $C$ and $M$ as if they are independent. Once {\em both} of them return in their corresponding DLOs (one independently and asynchronously of the other), then the \act{append} operation on the $CM$ adds a record that contains the hashes of the respective transactions in $C$ and $M$, and completes. 
This ensures that w.r.t. DLO $CM$ these transactions take place
atomically. Hence, in DLO $CM$ we can keep track the sequence of 
car sales and if needed, to validate them (since the hash function points to the
two corresponding hash functions of $C$ and $M$). In fact, these \act{append} operations
of $CM$ can be viewed as \emph{interconnecting (or multi-DLO) smart contracts}!} \\

\af{[[This looks like a good starting point but does not seem easy to generalize. The append operation in $CM$ is in fact a transaction that involves multiple DLO (two in this case). 
If there are more than 2 ledgers will there be additional $CM$ ledgers for every possible subset of ledgers that may interact in a transaction? To me seems like it may be better that the information of who else is involved is stored in each record.
]]}

\cg{Note that in the shared memory world, we would like something like that,
if the change of one value in register $r_1$ would cause a change in register
$r_2$. It seems that this is closer to transactional memory.}
	  
\cg{Intuition: It seems that linearizability should hold under interoperability. However, it does not seem to be the case for sequential
	and eventual consistencies.}	
}	

\section{Problem Statements and Model of Computation}
\label{model}
\subsection{Objects and Histories}
An object type $T$ is defined over the domain of values that any object of type $T$ may take,
and the operations that any object of type $T$ supports. 
An object $\obj{}$ of type $T$ is a \emph{concurrent object} if it is a shared object accessed by 
multiple processes~\cite{Raynal13}.
A \emph{history} of operations on an object $\obj{}$, denoted by $\hist{\obj{}}$, is the sequence of operations invoked 
on $\obj{}$. Each operation $\op$ contains an \emph{invocation} and a matching \emph{response} \af{event}. 
Therefore, a \emph{history}
is a sequence of invocation and response \af{events}, starting with an invocation.  We say that an operation $\op$  
is \textit{complete} in a history $\hist{\obj{}}$, if  the history contains both the invocation and the \emph{matching} response \af{events}
of $\op$. History  $\hist{\obj{}}$ is \emph{complete} if it only contains complete operation.  History $\hist{\obj{}}$ is \emph{well-formed} if no two invocation events that do not have a matching response 
event in $\hist{\obj{}}$ belong to the same process $\pr$. That is, each process $\pr$ invokes one operation at a time.
An object history $\hist{\obj{}}$ is \emph{sequential}, if it contains a sequence of alternating invocation and 
matching response events, starting with an invocation and ending with a response. We say that an operation 
$\op_1$ \emph{happens before} an operation $\op_2$ in a history $\hist{\obj{}}$, denoted by $\op_1\bef\op_2$,
if the response event of $\op_1$ appears before the invocation event of $\op_2$ in $\hist{\obj{}}$.\medskip 

\noindent{\bf The Ledger Object (LO).}
A \emph{ledger} $\ledger$ (as defined in \cite{FGKN_NETYS18}) is a concurrent object that stores a totally ordered sequence $\ledger.S$ of \emph{records} 
and supports two operations (available to any process $\pr$):
(i) $\ledger.\act{get}_\pr()$, and (ii) $\ledger.\act{append}_\pr(\rdr)$.
A \emph{record} is a triple $\rdr=\tup{\tau, \pr, v}$, where $\pr$ is the identifier of the process that created record $\rdr$,
$\tau$ is a {\em unique} record identifier from a set ${\mathcal T}$, 
and $v$ is the data of the record drawn from an alphabet \af{$\Sigma$.} 
We will use $\rdr.\pr$ to denote the id of the process that created record $\rdr$; similarly we define $\rdr.\tau$ and $\rdr.v$.
A process $\pr$ invokes an $\ledger.\act{get}_\pr()$ operation
to obtain the sequence $\ledger.S$ of records stored in the ledger object $\ledger$, and $\pr$ invokes an $\ledger.\act{append}_\pr(\rdr)$ operation to extend $\ledger.S$ with a new record $r$.
Initially, the sequence $\ledger.S$ is empty. 

\begin{definition}[Sequential Specification of a LO \cite{FGKN_NETYS18}]
	\label{def:sspec}  
	The \emph{sequential specification} of a ledger $\ledger$ over the sequential history $\hist{\ledger}$ is defined as follows. The value of the sequence $\ledger.S$ of the ledger is initially the empty sequence.
	If at the invocation event of an operation $\op$ in $\hist{\ledger}$ the value of the sequence in ledger $\ledger$ is $\ledger.S=V$, then: 
	\begin{enumerate}
		\item  if $\op$ is an $\ledger.\act{get}_\pr()$ operation, then the response event of $\op$ returns $V$, \af{while the value of $\ledger.S$ does not change,} and
		\item if $\op$ is an $\ledger.\act{append}_\pr(\rdr)$ operation \af{(and $r \notin V$),}
		then at the response event of $\op$ the value of the sequence in ledger $\ledger$ is $\ledger.S=V\extends r$ (where $\extends$ is the concatenation operator).
	\end{enumerate}
\end{definition}

In this paper we assume that ledgers are \emph{idempotent}, therefore a record $r$ appears only once in the ledger even when the same record $r$ 
is appended to the ledger by multiple $\act{append}$ operations (and hence the $r\notin V$ in the definition above). 


\subsection{Distributed Ledger Objects (DLO) and Multiple DLOs (MDLO)}

\noindent{\bf Distributed Ledger Objects (DLO).} 
A {\em Distributed Ledger Object (DLO)} $\dledger$, is a concurrent LO that is \textit{implemented} by (and possibly replicated among) 
a set $\srvSet$ of (possibly distinct and geographically dispersed) computing devices, we refer as \emph{servers}. Like any LO,
$\dledger$ supports the operations $\act{get}()$ and $\act{append}()$. 
We refer to the processes that invoke the $\act{get}()$ and $\act{append}()$ operations on $\dledger$ as \emph{clients}.

Each server $s\in\srvSet$ may fail. Thus, the distribution and replication of $\dledger$ offers availability and survivability of the ledger 
in case a subset of servers fail. At the same time, the fact that multiple clients invoke $\act{append}()$ and $\act{get}()$ requests 
to different servers, raises the challenge of \emph{consistency}: 
what is the latest value of the ledger when multiple clients access the ledger concurrently? The work in \cite{FGKN_NETYS18}   
defined three consistency semantics to explain the behavior of $\act{append}()$ and $\act{get}()$ operations 
when those are invoked concurrently by multiple clients on a single DLO. In particular, they defined \emph{linearizable}~\cite{Lynch1996,HW90}, 
\emph{sequential}~\cite{LL79}, and \emph{eventual}~\cite{MSlevels} consistent DLOs. In this work we will focus on \emph{linerizable}
DLOs which according to \cite{FGKN_NETYS18} are defined as follows: 

\begin{definition} [Linearizable Distributed Ledger Object \cite{FGKN_NETYS18}]
	\label{def:atomic}
	A distributed ledger $\dledger$ is {\em linearizable} if, given any complete, well-formed history 
	$\hist{\dledger}$, there exists a \af{sequential} permutation $\sigma$ of the operations in $\hist{\dledger}$ such that: 
	\begin{enumerate}
		\item $\sigma$ follows the sequential specification of a ledger object (Definition \ref{def:sspec}), and 
		\item for every pair of operations $\pi_1, \pi_2$, if $\pi_1\bef \pi_2$ in $\hist{\dledger}$, then $\pi_1$ appears before $\pi_2$ in $\sigma$. 
	\end{enumerate}
\end{definition}

\noindent{\bf Multiple DLOs (MDLO).}
%
A {\em Multi-Distributed Ledger Object} $\mdledger$, termed MDLO, consists of a collection $D$ of (heterogeneous) DLOs
and supports the following operations: (i) $\mdledger.\act{get}_\pr(\dledger)$, and (ii) $\mdledger.\act{append}_\pr(\dledger,\rdr)$.
The $\act{get}$ returns the sequence of records $\dledger.S$, where $\dledger\in D$. Similarly, the $\act{append}$ operation 
appends the record $\rdr$ to the end of the sequence $\dledger.S$, where $\dledger\in D$.
From the locality property of linearizability~\cite{HW90} it follows
that a MDLO is linearizable, if it is composed of linearizable DLOs. More formally:

\begin{definition} [Linearizable Multi-Distributed Ledger Object]
	\label{def:atomic}
	A multi-distributed ledger $\mdledger$ is {\em linearizable} if
	\af{$\forall \dledger\in D$, $\dledger$ is 
		linearizable, where $D$ is the set of DLOs $\mdledger$ contains.}
\end{definition}

For the rest of this paper, \af{unless otherwise stated,} we will focus on MDLOs consisting of two DLOs, as the same techniques can be  
generalized in MDLOs with more than two DLOs.  In particular, we consider the records of two clients, 
$A$ and $B$, on two different DLOs. For convenience we use DLO$_X$ to denote the DLO 
appended by records from $X$, for $X\in\{A,B\}$. Similarly we denote as $r_X$ the record that $X\in\{A,B\}$ 
wants to append on $DLO_X$. Furthermore, we view the DLOs and MDLOs as black boxes that reliably implement
the specified service, without going into further implementation details. 

\subsection{AtomicAppends: Problem Definition}
\label{subsec:prob-def}

Multi-DLOs allow clients to interact with different DLOs concurrently. This is safe when the records involved in concurrent operations 
are independent. However, it may raise semantic consistency issues when there exists a inter-dependent records,
e.g. a record $r_A$ must be inserted in DLO$_A$ when a record $r_B$ is inserted in DLO$_B$ and vice versa. 
More formally, we say that a record $r$ \emph{depends} on a record $r'$, if $r$ may be appended on its intended DLO, say $\dledger$,
only if $r'$ is appended on a DLO, say $\dledger'$. 
In this section we define a new problem, we term 
\emph{AtomicAppends}, that captures the properties we need to satisfy when multiple operations attempt to append 
dependent records on different DLOs. 
%

\begin{definition}[$2$-AtomicAppends]
	\label{def:2AA}
	Given two clients, $A$ and $B$, with mutually dependent records $r_A$ and $r_B$. 
	We say that records $r_A$ and $r_B$ are {\em appended atomically} on DLO$_A$ and DLO$_B$ respectively, 
	when:
	\begin{itemize}
		\item Either both records are appended to their respective DLOs, or none of the records is appended (safety).
		\item If neither $A$ nor $B$ fail, 
		then both records are appended eventually (liveness). 
	\end{itemize}
\end{definition}

An algorithm {\em solves} the $2$-AtomicAppends problem under a given system model, if it guarantees the safety and
liveness properties of Definition~\ref{def:2AA}.


The $k$-\emph{AtomicAppends} problem, for $k\ge 2$, is a generalization of the $2$-AtomicAppends that can be defined in the natural way ($k$ clients, 
with $k$ records, to be appended to up to $k$ DLOs.) From this point onwards, we will focus on the $2$-AtomicAppends problem, and when clear from the context, 
we will refer to it simply as \emph{AtomicAppends}.

\subsection{Communication, Timing and Append Models}
The previous subsections are independent of \af{the communication medium, and the failure and timing model.} 
We now specify the communication and timing assumptions considered in the remainder of the paper.
We also consider different models on who can append a specific record.\medskip

\noindent {\bf Communication model:} We assume a {\em message-passing} system where messages are neither lost nor corrupted in transit. This applies
to both the communication among clients and between clients and DLOs \af{(i.e, the invocation and response messages of the operations)}.\medskip 

\noindent {\bf Timing models:} We consider {\em synchronous} and {\em asynchronous} systems with respect to both 
computation and communication. In the former, the evolution of the system is governed by a global clock and a local computation, a message delivery or a DLO operation is guaranteed to complete within a predefined time-frame. For simplicity, we set this time-frame to correspond to one unit of time.
In the latter, no timing assumptions are made \af{beyond that they will complete in a finite time}.\medskip

\noindent {\bf Append models:} We consider three different append models. In the first, and most restrictive one, which we refer to as {\em Client appends with no delegation}, or {\bf\em NoDelegation} for short, the only way a client can append its record, is by issuing append operations directly to the corresponding DLOs, i.e., no other entity, including the other client, can do so. The second one, referred to as \emph{Client appends with delegation},  
or {\bf\em WithDelegation} for short, is a relaxation of the first model, in which one client can append the record of the other client (if it knows it). Finally, in the third model, a record can be appended by an external (w.r.t. the clients) entity, provided it knows the record. 

\subsection{Client Model and Utility-based Problem Definitions}

\subsubsection{Client Setting}
We assume that clients are {\bf\em rational},
i.e., they act selfishly, in a game-theoretic sense, in order to increase their utility~\cite{osborne2004introduction}. 
Furthermore, clients are {\bf\em risk-averse}, 
i.e.,
when uncertain, they prefer to lower the uncertainty,
even if this might lower their \af{potential} utility~\cite{osborne2004introduction}\cg{; we consider a client to be uncertain when 
	her actions may lead to multiple possible outcomes.}   
To this respect, a rational, risk-averse client runs its own utility-driven protocol that defines its strategy
	towards a given protocol (game), in such a way that it would not decrease its utility  or increase its uncertainty. 
	
	Regarding failures, 
%
the only type of failure we consider in this work, is {\bf\em crash failure}, in which a 
client might cease operating without any a priori warning.

Under this client model, {\bf\em an algorithm ${\mathcal A}$ solves the AtomicAppends problem}, if it provides
enough incentive to the clients to follow this algorithm (which guarantees the  safety and
liveness properties of Definition~\ref{def:2AA}, \af{possibly in the presence of crashes}), 
without any client deviating from its utility-driven protocol.
If no such algorithm can be designed, then {\bfseries\em the AtomicAppends problem cannot be solved}.

\subsubsection{Utility Models}
Looking at the definition of the AtomicAppends problem, one might wonder what is the incentive of the clients to achieve this
both-or-none principle on the appends. Let $U_X$ denote the utility function (or incentive) for each client $X$.
\af{A selfish rational client $X$ will try to maximize her utility $U_X$.}
Depending on the 
\af{possible combinations of values the clients' utility functions can take,}
we can identify a number 
of different scenarios, we refer as \textit{utility models}. 
Let us now motivate and specify two such utility models.\medskip 

\noindent{\bf Collaborative utility model.} Consider two clients $A$ and $B$ that have agreed to acquire a property (e.g., a piece of land) in common, and each has to provide half of the cost. If one of them, say $A$, pays while $B$ backs off from the deal, then $A$ incurs 
\af{in}
expenses while not getting the property. On the other hand, $B$ loses no money in this case, but her reputation may suffer. If both of them back off, they do not have any cost, while if both proceed with the payments then they get the property, which they prefer.

If $U_X()$ denotes the utility of agent $X \in \{A,B\}$, then we have the following relations in the scenario described: 
$$\small{U_X(\text{both agents pay}) > U_X(\text{no agent pays}) 
> U_X(\text{only agent }  \bar{X} \text{ pays}) > U_X(\text{only agent }  X \text{ pays}).}$$

In relation to the AtomicAppends problem, \af{record} $r_A$ contains the transaction by which  client $A$ pays her share of the deal, and the append of $r_A$ in \DLOA 
carries out this payment. Similarly for client $B$. So, here we see that under the above utility model, 
both clients have incentive for both appends to take place. Observe that this situation is similar to the {\em Coordinated Attack} problem~\cite{CoordinatedAttack}, 
in which two armies need to agree on attacking a common enemy. If both attack, then they win; if only one of them attacks, then that army is destroyed, while the other is disgraced;
if none of them attack, then the status quo is preserved. 

These utility examples fall in the general utility model depicted
in the first row of Table~\ref{tbl:utilmodels}, which we
call {\em collaborative}.
We will be referring to the AtomicAppends problem under this utility model as the 
{\bf\em Collaborative AtomicAppends} problem.\medskip
 
\begin{table}[t]
	\begin{small}
		\centering
		\begin{tabular}{|c|c|}
			\hline
			{\bf Utility model} & {\bf Utility of client $\mathbf{X}$}\\
			\hline\hline \vspace{-.9em}\\            
			Collaborative & $U_X(\text{both append}) > U_X(\text{none appends})>$\\ 
			& $ U_X(\text{only }  \bar{X} \text{ appends}) > U_X(\text{only }  X \text{ appends})$\\ 
			\hline \vspace{-.9em}\\ 
			Competitive & $U_X(\text{only }  \bar{X} \text{ appends}) > U_X(\text{both append}) >$\\
			& $U_X(\text{none appends}) 
			>  U_X(\text{only }  X \text{ appends})$\\
			\hline
		\end{tabular}\smallskip
		\caption{\small The utility of client $X\in\{A,B\}$ in the two utility models considered.}
		\label{tbl:utilmodels}
	\end{small}
\end{table}

\noindent{\bf Competitive utility model.} We now consider a different utility model. 
Consider two clients $A$ and $B$ that have agreed to exchange their goods. E.g, $A$ gives his car to $B$, and $B$ gives a specific amount as payment
to $A$. If one of them, say $A$, gives the car to $B$, but $B$ does not pay, then $A$ loses the car while not getting any money. 
On the other hand, $B$ gets the car for free! If both of them back off from the deal, then they do not have any cost. Both proceeding with the exchange is not necessarily their highest preference  (unlike in
the previous collaborative model).

So, if $U_X()$ denotes the utility of agent $X \in \{A,B\}$, then we have the following relations in the scenario described: 
$$\small{U_X(\text{only } \bar{X} \text{ proceeds}) > U_X(\text{both agents proceed}) > U_X(\text{no agent proceeds}) > U_X(\text{only } X \text{ proc.}).}$$

In relation to the AtomicAppends problem, \af{record} $r_A$ contains the transaction transferring the deed of $A$'s car to $B$, and the append of $r_A$ in \DLOA 
carries out this transfer. Similarly, $r_B$ contains the transaction by which client $B$ transfers a specific monetary amount to $A$ (pays for the car), 
and the append of $r_B$ in \DLOB carries out this monetary transfer. Observe that this scenario is similar to the {\em Atomic Swaps} problem~\cite{DBLP:conf/podc/Herlihy18}.

These utility examples fall in the general utility model depicted
in the second row of Table~\ref{tbl:utilmodels}, which we
call {\em competitive}. We will be referring to the AtomicAppends problem under this utility model as the 
{\bf\em Competitive AtomicAppends} problem.

{

No matter of the utility,  failure or timing model assumed, our objective  is to provide a solution to the AtomicAppends problem. 
Our investigation will focus on identifying the modeling conditions under which this is possible or not, and what is the
impact of the model on the solvability of the problem.

\section{AtomicAppends in the Absence of Client Crashes}   
\label{sec:nocrash}
We begin our investigation in a setting with no client crashes, so to study the impact of the utility model
on the solvability of the problem.

\sloppy{It is not difficult to observe that in the absence of crash failures, even under asynchrony and NoDelegation, there is a straightforward algorithmic solution to the {\em Collaborative AtomicAppends} problem: the algorithm simply has client $A$ (resp. client $B$) issuing operation $append(DLO_A,r_A)$ (resp.  $append(DLO_B,r_B)$). Based on Table~\ref{tbl:utilmodels}, the clients' utilities are maximized when both append their corresponding records. Since there are no failures and the DLOs are reliable, these
	operation are guaranteed to complete, nullifying the clients' \cg{uncertainty}.
	Hence, the clients will follow the algorithm, without deviating from
	their utility-driven protocol. This yields the following result:
	
	\begin{theorem}
		\label{thm:collaborativeNoFailures}
		Collaborative 2-AtomicAppends can be solved in the absence of failures, even under asynchrony and NoDelegation.
	\end{theorem}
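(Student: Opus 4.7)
The plan is to exhibit a trivial algorithm and verify that it meets Definition~\ref{def:2AA} while remaining consistent with the rational, risk-averse client model. The algorithm I have in mind simply instructs each client $X \in \{A,B\}$ to invoke $\act{append}(\text{DLO}_X, r_X)$ on its own DLO, with no inter-client communication; in particular it complies with NoDelegation because each client appends only her own record.

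For the safety and liveness verification, I will note that under this setting clients do not crash, the DLOs are reliable linearizable black boxes, and both invocations are therefore guaranteed to return, even in the asynchronous model where only finite (but unbounded) completion time is assumed. Hence both $r_A$ and $r_B$ are eventually appended, which establishes liveness directly; safety holds vacuously because the ``none appended'' outcome never arises along any execution of this algorithm.

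The more delicate part, and the one I expect to be the main obstacle, is the rationality argument: I must show that both clients actually execute the algorithm as written, given that they may deviate if deviation is strictly preferable. Here I will appeal to the collaborative row of Table~\ref{tbl:utilmodels}. A client $X$ who follows the protocol steers the outcome to ``both append,'' the unique maximum of $U_X$; the only available deviation (refusing to append) would move the outcome from ``both append'' to ``only $\bar X$ appends,'' which is the strict minimum of $U_X$. Crucially, because no client fails and the DLOs are reliable, $X$ faces no uncertainty about the outcome conditional on her own action, so risk-aversion has nothing to exploit against participation. Making this absence-of-uncertainty step precise is the real content of the proof: once it is pinned down, the collaborative utility ordering forces compliance and the theorem follows.
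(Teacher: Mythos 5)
Your proposal is correct and takes essentially the same approach as the paper: the same trivial algorithm (each client directly appends her own record), the same appeal to reliability of the DLOs to guarantee completion and eliminate uncertainty, and the same use of the collaborative utility ordering to rule out deviation. One small slip worth fixing: in the collaborative ordering of Table~\ref{tbl:utilmodels} the outcome ``only $\bar X$ appends'' is the \emph{second-worst} for $X$ (the strict minimum is ``only $X$ appends''), but since it is still strictly below ``both append'' your deviation argument goes through unchanged.
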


	However, this is {\em not} the case for the \emph{Competitive AtomicAppends} problem. The problem
	cannot be solved, even in the absence of failures, in synchrony, and WithDelegation:

	\begin{theorem}
		\label{thm:competitiveNoFailures}
		Competitive 2-AtomicAppends cannot be solved in the absence of failures, even in synchrony and WithDelegation.
	\end{theorem}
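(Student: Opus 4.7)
I argue by contradiction. Suppose some algorithm $\mathcal{A}$ solves Competitive 2-AtomicAppends in a synchronous, crash-free system under WithDelegation. Consider any execution $\xi$ in which both clients follow $\mathcal{A}$. By the liveness clause of Definition~\ref{def:2AA}, $\xi$ terminates with both $r_A$ and $r_B$ appended, so there is a well-defined first $\act{append}$ operation executed in $\xi$. By the symmetry between $A$ and $B$ built into Table~\ref{tbl:utilmodels}, I may assume without loss of generality that this first append places $r_A$ into DLO$_A$, leaving the system in the intermediate state $\{r_A\}$---which is simultaneously $B$'s most preferred outcome and $A$'s least preferred outcome.

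Next I case-split on which client performs the mandatory subsequent append of $r_B$. \emph{Case 1: $B$ appends $r_B$.} Then the algorithm is asking $B$ to move from her top-ranked state $\{r_A\}$ to her second-ranked state $\{r_A, r_B\}$. Being rational and risk-averse, $B$'s best response is to refuse: no further action is needed and the state stays at $\{r_A\}$, a strict utility gain for her and a violation of safety for the algorithm. \emph{Case 2: $A$ appends $r_B$ via delegation.} Then $A$ must know $r_B$ at that moment; since $r_B$ has not been appended yet she cannot have obtained it from DLO$_B$, so she must have received it from $B$ through prior direct communication. But as soon as $A$ holds $r_B$, the WithDelegation model permits her to invoke $\act{append}_A(r_B)$ on DLO$_B$ at any later point without ever having appended $r_A$; this yields state $\{r_B\}$, which gives $A$ her strictly best utility. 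Rationality therefore forces her to deviate in precisely this way, and safety is again violated.

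The subtle part of the plan is ruling out clever scheduling by $B$ that would reveal $r_B$ to $A$ only at a moment when deviation has become unprofitable. The key observation is that the WithDelegation model places no restriction on when a holder of a record may invoke an append; hence whenever $A$ first possesses $r_B$, the deviation to state $\{r_B\}$ remains available and is strictly preferred to $\{r_A, r_B\}$. Anticipating this, a rational $B$ would never disclose $r_B$ to $A$ in the first place, which collapses Case 2 back into Case 1. In either branch we reach a contradiction with the assumption that $\mathcal{A}$ incentivizes rational, risk-averse clients to follow it, so no such algorithm can exist.
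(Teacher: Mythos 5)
Your proof is essentially sound and rests on the same two observations as the paper's --- that under the competitive ordering a client's utility strictly drops whenever its \emph{own} record gets appended (whatever the current state), and that risk-aversion therefore stops a client from ever disclosing its record to the other --- but you organize the contradiction around the \emph{second} append and derive a \emph{safety} violation, whereas the paper works on the \emph{first} record and derives a \emph{liveness} violation. Concretely, the paper argues directly that $r_A$ can never reach DLO$_A$: $A$ will not append it herself (this moves her from ``none'' to ``only $A$'', or from ``only $B$'' to ``both'', a strict loss either way) and will not hand it to $B$ (risk of $B$ appending it, with the same loss), so liveness fails with no case analysis on who performs which append. Your route has to assume the execution completes, locate the first append, and then split on who is charged with appending $r_B$; your Case~1 is clean, but Case~2 is slightly tangled --- you describe $A$'s deviation as appending $r_B$ ``without ever having appended $r_A$'' even though in the execution under consideration $r_A$ is already in DLO$_A$ (possibly placed there by $B$ via delegation), so the deviation you really need is the anticipation step at the end, where $B$ refuses to disclose $r_B$ and Case~2 collapses into Case~1. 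That closing argument is exactly the paper's ``never disclose'' claim applied to $B$, so the logic does close; the paper's version is simply shorter and avoids the detour. One small payoff of your framing is that it makes explicit that even if the protocol somehow got one record in, the other party's best response is to stop, which is a nice way to see why no ordering of the two appends can be incentive-compatible.
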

	
	\begin{proof}
	Let us firstly show that client $A$ will never send its record $r_A$ to the other client $B$. The reason is that this would carry a large risk of $B$ appending $r_A$ itself (and $A$ is risk-averse). Observe that, independently on whether $B$ already appended $r_B$ or not, this would reduce $A$'s utility (see Table~\ref{tbl:utilmodels}). 	
	Then, we secondly claim that client $A$ will not directly append its own record $r_A$ either. The reason is that, again, independently on whether $B$ already appended $r_B$ or not, this would reduce $A$'s utility (see Table~\ref{tbl:utilmodels}).
	Hence, client $A$ will not have its record $r_A$ appended to $DLO_A$ ever. However, this violates the liveness property of 
	Definition~\ref{def:2AA}, since by assumption neither $A$ nor $B$ fail by crashing.\end{proof}	

	Note that the above result does not contradict the known solutions for atomic swaps (e.g.,~\cite{DBLP:conf/podc/Herlihy18}), as the primitives used
	are stronger than the ones offered by DLO (e.g., some form of validation is needed for hashlocks). 
	As we show in Section~\ref{sdlo}, the problem can be solved in the model we consider, if a reliable external entity is used between the
	clients and the MDLO. 
	In view of Theorems~\ref{thm:collaborativeNoFailures} \af{and \ref{thm:competitiveNoFailures},} in the next section we focus on the
	study of {\em Collaborative AtomicAppends} in the presence of crash failures.
	

\section{Crash-prone Collaborative AtomicAppends with Client Appends}
\label{nosdlo}

In this section we focus on the Collaborative AtomicAppend problem \af{assuming
	that at least one client may crash,
under the NoDelegation and WithDelegation client append models.}
Observe from Table~\ref{tbl:utilmodels} that both clients have incentive to get both records appended, versus the case of
no record appended, with respect to utilities.
However, as we will see, in some cases, crashes introduce uncertainty that renders the problem unsolvable.

\subsection{Client Appends with No Delegation} 
We prove  that \emph{Collaborative AtomicAppends} cannot be guaranteed by any algorithm ${\mathcal A}$, even in a {\em synchronous system},
when at least one client crashes and the clients cannot delegate the append of their records.

\begin{theorem}
	\label{thm:impossibleDirect}
When at least one client crashes, Collaborative 2-AtomicAppends cannot be solved in \af{the NoDelegation append model,} even in a synchronous system.
\end{theorem}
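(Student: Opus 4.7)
The plan is to argue by contradiction. Suppose some algorithm $\mathcal{A}$ solves Collaborative 2-AtomicAppends in the NoDelegation append model under synchrony even when one client may crash. Consider the crash-free execution $\xi_0$ of $\mathcal{A}$ in which both clients follow their utility-driven protocols. By the liveness clause of Definition~\ref{def:2AA}, both $r_A$ and $r_B$ are eventually appended in $\xi_0$, so some client, which I take to be $A$ without loss of generality by the symmetry of the collaborative utility model, invokes the first append $\act{append}(r_A)$ at some time $t_0$.

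Next I construct a companion execution $\xi_1$ that is identical to $\xi_0$ up to $t_0$ but in which $B$ crashes immediately after $t_0$ without ever invoking $\act{append}(r_B)$. Under NoDelegation only $B$ itself may place $r_B$ on $DLO_B$, so $r_B$ is never appended in $\xi_1$; since the DLOs are reliable, $A$'s append completes, and $\xi_1$ terminates with $r_A$ as the only appended record. By the collaborative utility of Table~\ref{tbl:utilmodels}, this leaves $A$ with its minimum-possible utility $U_A(\text{only } A \text{ appends})$. The crux of the argument is that $A$'s local state at $t_0$ is the same in $\xi_0$ and $\xi_1$, so $A$ cannot tell these executions apart when deciding whether to invoke the append. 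Hence, from $A$'s standpoint, invoking $\act{append}(r_A)$ at $t_0$ is an action with uncertain outcome ranging from $U_A(\text{both append})$ down to $U_A(\text{only } A \text{ appends})$, while the alternative of not invoking the append yields in the worst case $U_A(\text{only } B \text{ appends})$, which is strictly larger than $U_A(\text{only } A \text{ appends})$. Risk-aversion therefore forces $A$ to decline invoking $\act{append}(r_A)$ at $t_0$, contradicting the minimality of $t_0$ and, iterating the same argument at any would-be later first-append time, implying that $\mathcal{A}$ leaves both records unappended in $\xi_0$, which violates liveness.

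The main obstacle is ruling out protocols in which $B$ tries to convey a binding pre-commitment to $A$ before $A$ must append (for instance via auxiliary messages or writes to other DLOs). The NoDelegation hypothesis defuses this: because only $B$ can ever place $r_B$ on $DLO_B$ and the DLO interface exposes only $\act{get}$ and $\act{append}$, any pre-commit from $B$ to $A$ is non-binding---the adversary may crash $B$ after the pre-commit but before the actual append of $r_B$---so the companion execution $\xi_1$ can always be constructed irrespective of the preceding dialogue. The argument therefore applies to every algorithm $\mathcal{A}$ in this model, yielding the desired impossibility.
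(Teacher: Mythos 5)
Your proposal is correct, and its core is exactly the paper's argument: take a crash-free execution, identify the first client (WLOG $A$) to invoke an append at some time $t$, and build an indistinguishable companion execution in which $B$ crashes at $t$ and, by NoDelegation, $r_B$ can never be appended by anyone. Where you diverge is in how you extract the contradiction. The paper stops as soon as the companion execution $\xi_1$ is built: there $r_A$ is appended and $r_B$ is not, which directly violates the safety clause of Definition~\ref{def:2AA}, so $\mathcal{A}$ fails --- no appeal to utilities or risk-aversion is needed. You instead use $\xi_1$ only to compute $A$'s worst-case payoff, argue that a risk-averse $A$ would therefore refuse to be the first appender, and conclude with a violation of liveness. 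Both routes are sound under the paper's definition of ``solving'' (which requires that clients have no incentive to deviate \emph{and} that safety and liveness hold), but the paper's close is more robust: your version rests on reading risk-aversion as a maximin comparison of worst-case utilities, whereas the paper only loosely defines risk-aversion as a preference for lower uncertainty, and here both of $A$'s options leave two possible outcomes. Since you have already exhibited the safety-violating execution $\xi_1$ en route, you could simply declare victory there and drop the game-theoretic step; your final ``iterate at every would-be first-append time'' sentence, which is the least rigorous part of the write-up, then becomes unnecessary. Your closing observation that NoDelegation makes any pre-commitment by $B$ non-binding is a nice explicit justification of why the companion execution always exists; the paper leaves this implicit.
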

\begin{proof}
%
Consider an algorithm ${\mathcal A}$ that clients can execute without deviating from their utility-driven protocol. Assume algorithm ${\mathcal A}$
solves the Collaborative 2-AtomicAppends problem in the model described.   
Let $E$ be an execution of algorithm ${\mathcal A}$ in which no client crashes. 
By liveness, both clients $A$ and $B$ must issue append operations. Consider the first client, say $A$ without loss of generality, that issues the append operation. Let us assume that $A$ issues $append(\mathrm{DLO}_A,r_A)$ at time $t$. Hence, $B$ issues $append(DLO_B,r_B)$ at time no earlier than $t$, and $A$ cannot verify that the record $r_B$ is in the corresponding $DLO_B$ until time $t' > t$.

Now consider execution $E'$ of  algorithm ${\mathcal A}$ that is identical to $E$,  {\em up to time} $t$. Now at time $t$ client $B$ crashes, and hence it never issues $append(DLO_B,r_B)$. Since $A$ cannot differentiate until time $t$ this execution from $E$, it issues $append(DLO_A,r_A)$ at time $t$, appending $r_A$ to $DLO_A$. Even if after time $t$, $A$ detects the crash of client $B$, by the specification of NoDelegation, it cannot append record $r_B$ in \DLOB. This, together with the fact that $B$ has crashed, yields that record $r_B$ is never appended to $DLO_B$, violating safety. 
Hence, we reach a contradiction, and algorithm~${\mathcal A}$ does not solve the Collaborative 2-AtomicAppends problem. 
\end{proof}

\subsection{Client Appends With Delegation}
Let us now consider the more relaxed client append model of WithDelegation. 
It is not difficult to see that in this model, the  impossibility proof of Theorem~\ref{thm:impossibleDirect} breaks. 
In fact, it is easy to design an algorithm that solves the collaborative AtomicAppends problem in a synchronous system, if at most one client crashes.
In a nutshell, first both clients exchange their records. When a client has both records, it appends them (one after the other) to the corresponding DLO; otherwise it does not append any record. We refer to this algorithm as Algorithm $\mathcal{A}_{DSync}$ and
its pseudocode is given as Code~\ref{alg:DSync}. We show:

\remove{
\begin{itemize}
	\item In the first step $A$ sends the record $r_A$ to $B$ and simultaneously $B$ sends record $r_B$ to $A$.
	\item After this step, if client $X \in \{A,B\}$ has both records, it appends $r_X$ in \DLOX and then it appends $r_{\bar{X}}$ in \DLOBX. Otherwise, it does not append any record.
\end{itemize} 
}

\begin{theorem}
	\label{thm:possibleDirectSync}
	In \af{the WithDelegation append model,} Algorithm $\mathcal{A}_{DSync}$ solves the Collaborative 2-AtomicAppends problem in a synchronous system, if at most one client crashes.
\end{theorem}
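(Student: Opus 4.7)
The plan is to verify three things about Algorithm $\mathcal{A}_{DSync}$: (i) that rational risk-averse clients have incentive to follow it, (ii) that it satisfies the safety property of Definition~\ref{def:2AA} under every admissible crash pattern, and (iii) that it satisfies liveness when no client crashes. Synchrony will be used in a single but crucial place: to let a client detect, within one time unit, that the counterpart failed to send its record.

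For incentive compatibility I would argue as follows. Consider Step 1, in which each client $X$ forwards $r_X$ to $\bar{X}$. Under the collaborative ordering of Table~\ref{tbl:utilmodels}, the two best outcomes for $X$ both require $r_X$ to reach \DLOX, which in the WithDelegation model can occur via $\bar{X}$; thus forwarding is weakly dominant and strictly dominant in the no-crash branch, so neither sending nor suppressing creates uncertainty that hurts $X$. For Step 2, if $X$ has both records then executing the two appends leads to the \emph{both append} outcome, which is $X$'s top utility; if $X$ has only $r_X$ then, by synchrony, $X$ concludes that $\bar{X}$ crashed before sending, so appending $r_X$ alone would produce the strictly worst outcome \emph{only $X$ appends}, whereas doing nothing produces the strictly better \emph{none appends}. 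Risk aversion reinforces Step 2 because abstaining also removes all uncertainty about $\bar{X}$'s future behaviour.

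For safety I would case-split on how the (at most one) crash lands. If no client crashes, both messages arrive within one time unit, each client holds both records, each executes two appends, and idempotency of the DLOs ensures each record is present exactly once on its target DLO. If $\bar{X}$ crashes before sending $r_{\bar{X}}$, then by the synchronous bound $X$ observes the missing message, enters the ``do nothing'' branch, and $\bar{X}$ likewise appends nothing; so neither record is appended. If $\bar{X}$ crashes after sending $r_{\bar{X}}$ (possibly mid-append), then $X$ holds both records and, being non-crashed, completes both appends against the reliable DLOs; idempotency absorbs any partial append $\bar{X}$ may have issued. In every case the both-or-neither disjunction of Definition~\ref{def:2AA} is met. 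Liveness when no client crashes is immediate from the previous paragraph: each client receives both records within one time unit and then issues two reliable appends that complete in bounded time.

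The main obstacle, in my view, is the rationality argument rather than the mechanical safety case analysis: one must justify both that $X$ forwards its record despite gaining nothing directly from the act of sending, and that $X$ refrains from appending when it has only $r_X$ in hand rather than trying to salvage partial utility. Both steps hinge on the strict ordering \emph{none appends} $>$ \emph{only $X$ appends} in Table~\ref{tbl:utilmodels} and on the fact that synchrony makes the detection of $\bar{X}$'s crash certain rather than merely probable, eliminating the uncertainty that risk aversion would otherwise penalise; this is precisely why the theorem is stated for synchronous systems only.
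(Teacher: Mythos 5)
Your proposal is correct and follows essentially the same route as the paper: a case analysis on when the (at most one) crashing client crashes relative to sending and appending, with idempotency of the DLOs absorbing partial appends, and an appeal to the collaborative utility ordering of Table~\ref{tbl:utilmodels} to show clients have no incentive or risk in deviating. You merely spell out the incentive-compatibility step in more detail than the paper, which dispatches it in one sentence.
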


\begin{proof}
If no client crashes, then the proof of the claim is straightforward. Hence, let us consider the case that one client crashes, say $A$. There are three cases: 
\begin{enumerate}[label=(\alph*)]
	\item Client $A$ crashes
	before sending its \af{record.} In this case, client $B$ will not append any record and the problem is solved (none case).
	\item  Client $A$ crashes after sending its record,
	but before it does any append. In this case client $B$ will receive \af{$A$'s record and append both records} (both case).
	\item Client $A$ crashes after it performs
	one or two of the appends. Client $B$ will perform both appends, and since DLOs guarantee that a record is appended only once (they are idempotent), the problem is solved (both case).
\end{enumerate}
The above cases and Table~\ref{tbl:utilmodels} suggest that the clients have no risk in running Algorithm $\mathcal{A}_{DSync}$ with respect to their utility-driven
protocol. Hence, the claim follows.
\end{proof}

\begin{algorithm}[t]
	\caption{\small $\mathcal{A}_{DSync}$: \af{AtomicAppends WithDelegation, Synchrony,} at most one crash; code for Client $X\in \{A,B\}$.}
	\label{alg:DSync}
	\begin{algorithmic}[1]
		\State \textbf{send}  $r_X$ to client $\bar{X}$
		\State \textbf{If} $r_{\bar{X}}$ is received from client $\bar{X}$ \textbf{then}
		\Indent 
		\State \textbf{append}$(DLO_{X},r_X)$
		\State \textbf{append}$(DLO_{\bar{X}},r_{\bar{X}})$
		\EndIndent
		\State \textbf{Else} (client $\bar{X}$ has crashed) 
		\Indent 
		\State \textbf{no append}
		\EndIndent   
	\end{algorithmic}
\end{algorithm}
\remove{
	\af{[AF: I think mentioning that the algorithm does not work for the competitive model is good. Regarding solutions that solve atomic swap, they do use timeouts, and I do not think that can be easily removed. They do not work with asynchrony, but I do not have a proof.]} \cg{[CG: Ok, then we can mention the above, and
		argue that there are known solutions for competitive (atomic swaps) that use timeouts and there is up to one crash failure. Observe that these solutions would not work in asynchrony, and then say, that for collaborative we have a solution if at most one crashes, which is what comes next (so go smooth to the next algorithm.)]}
	\nn{Yes. I also agree with the discussion. Also moving from the synchronous requirements of competitive solutions to our next algorithm 
		will offer a nice transition. }
}

\begin{algorithm}[t]
	\caption{\small ${\mathcal A}_{DAsync}$: \af{AtomicAppends WithDelegation, Asynchrony,} at most one crash; code for Client $X\in \{A,B\}$.}
	\label{alg:DAsync}
	\begin{algorithmic}[1]
		\State \textbf{send}  $r_X$ to client $\bar{X}$
		\State \textbf{wait} until 
		$r_{\bar{X}}$ is received from client $\bar{X}$ \textbf{or}
		a \textbf{get}$(DLO_{\bar{X}})$ returns ${S}$ such that 
		$r_{\bar{X}}\in S$ (i.e., $r_{\bar{X}}\in DLO_{\bar{X}}.S$)
		\Indent 
		\State \textbf{append}$(DLO_{X},r_X)$
		\State \textbf{if} $r_{\bar{X}}\not\in DLO_{\bar{X}}.S$ \textbf{then} \textbf{append}$(DLO_{\bar{X}},r_{\bar{X}})$
		\EndIndent    
	\end{algorithmic}
\end{algorithm}

Observe that, as written, Algorithm $\mathcal{A}_{DSync}$ is not well-suited for asynchrony, since a client cannot
distinguish the case on whether the other client has crashed or its message is taking too long to arrive.
For this purpose we modify  Algorithm $\mathcal{A}_{DSync}$ and obtain Algorithm $\mathcal{A}_{DAsync}$ 
which is better-suited for an asynchronous system. In a nutshell, first a client sends its record to 
the other client, and then it waits until either it receives the record of the other
client, or that record is already appended (by the other client) in the corresponding DLO (this would
not have happened in a synchronous system).
 When one of the two hold, the client proceeds to append the two records (one after the other).
 The pseudocode of algorithm $\mathcal{A}_{DAsync}$ is given as Code~\ref{alg:DAsync}.
 We show:

\remove{
\begin{itemize}
	\item []  Each client $X$ does:
	\begin{enumerate}
	\item Send own record $r_X$ to client $\bar{X}$.
	\item Wait until record $r_{\bar{X}}$ is received or it has been
	appended to DLO$_{\bar{X}}$ (by $\bar{X}$) [issues periodic $get()$ operations to 
	DLO$_{\bar{X}}$]
	\item Append record $r_X$ to DLO$_{X}$
	\item Append record $r_{\bar{X}}$, if not yet appended, to DLO$_{\bar{X}}$.  
	\end{enumerate}
\end{itemize} 
}


\begin{theorem}
	\label{thm:possibleDirectAsync}
	In \af{the WithDelegation append model,} Algorithm $\mathcal{A}_{DAsync}$ solves the Collaborative 2-AtomicAppends problem in an asynchronous system, if at most one client crashes.
\end{theorem}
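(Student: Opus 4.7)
The plan is to verify that $\mathcal{A}_{DAsync}$ simultaneously satisfies (i) safety, (ii) liveness in the no-crash case, and (iii) compatibility with the clients' risk-averse, utility-driven protocols. The structural fact driving everything is that, by inspection of the pseudocode, the algorithm only ever produces the outcomes \emph{both records appended} or \emph{no record appended}: any client that proceeds past the wait on line 2 always executes both appends in lines 3--4.

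For safety I would perform a case analysis analogous to the one used for $\mathcal{A}_{DSync}$ in Theorem~\ref{thm:possibleDirectSync}, but adapted to asynchrony. If no client crashes, the argument is identical to the no-crash case there. If a client crashes, then WLOG $A$ crashes, and there are three subcases by the crash point: (a) before sending $r_A$: $B$ blocks at line 2 forever, since $r_A$ is neither transmitted nor ever appended to $DLO_A$; no record is appended, giving the \emph{none} case; (b) after sending $r_A$ but before performing any append: eventually $B$ receives $r_A$ through the reliable channel, completes the wait, and appends both records, giving the \emph{both} case; (c) after appending $r_A$ (and possibly $r_B$): $B$ eventually observes $r_A$ either via the message or via a \textbf{get} on $DLO_A$, completes the wait, appends $r_B$ to $DLO_B$ if not already present (idempotence handles duplicates), and, by the check on line 4, avoids a redundant append of $r_A$, again giving the \emph{both} case.

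For liveness, the no-crash subcase of the above analysis already shows that both records are appended. For the rationality requirement I would argue as in Theorem~\ref{thm:possibleDirectSync}: since the algorithm guarantees only outcomes in $\{\text{both},\text{none}\}$, and the collaborative ordering $U_X(\text{both}) > U_X(\text{none}) > U_X(\text{only }\bar{X}) > U_X(\text{only }X)$ places these two strictly above any partial outcome, no risk-averse rational client has an alternative strategy that reduces uncertainty or raises utility.

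The main obstacle compared to the synchronous setting is the replacement of a timeout with the disjunctive wait on line 2. Under asynchrony, no client can distinguish a crashed peer from an arbitrarily slow one, so a pure message-based wait either blocks forever or exposes the sender to the \emph{only $X$} outcome it most dreads. The crux of the proof is therefore to justify that augmenting the wait with a \textbf{get} probe of $DLO_{\bar{X}}$ is exactly enough to close case (c): the invariant to articulate is that while $X$ is still waiting on line 2, either $\bar{X}$ has not yet appended (and will eventually send $r_{\bar{X}}$ or crash before doing anything) or $\bar{X}$ has already appended, in which case repeated \textbf{get} calls on $DLO_{\bar{X}}$ must eventually return $r_{\bar{X}}$. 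Establishing this invariant is the central step of the argument.
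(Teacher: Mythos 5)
Your proposal is correct and follows essentially the same route as the paper: a case analysis on the crash point of the failed client (before sending, after sending but before appending, after one or two appends), with idempotence of appends handling duplicates and the collaborative utility ordering ensuring that the both/none outcome set keeps risk-averse clients from deviating. The only addition is your explicit articulation of the invariant behind the disjunctive wait, which the paper leaves implicit but which matches its reasoning.
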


\begin{proof}
	As before, we will prove this by case analysis. 
	If no client crashes, then the proof follows easily, given the fact
	that a DLOs guarantees that a record is appended only once. 
	 Hence, let us consider the case that one client crashes, say $A$. There are three cases: 
	\begin{enumerate}[label=(\alph*)]
		\item Client $A$ crashes
		before sending its \af{record.} In this case, client $B$ will not
		proceed to append any record (none case). Observe that client $B$
		might not terminate, but the problem (safety) is not violated.
		\item  Client $A$ crashes after sending its record,
		but before it does any append. In this case client $B$ will receive \af{$A$'s record and append both records} (both case).
		\item Client $A$ crashes after it performs
		one or two of the appends (it means it has sent its record to client $B$). Client $B$ will perform both appends, and since DLOs guarantee that a record is appended only once, the problem is solved (both case).
	\end{enumerate}
	The above cases and Table~\ref{tbl:utilmodels} suggest that the clients have no risk in running Algorithm $\mathcal{A}_{DAsync}$ with respect to their utility-driven
	protocol. Hence, the claim follows.
\end{proof}

As already discussed in case (a) of the above proof, it is possible for the client that has not crashed to wait forever, as it cannot distinguish the case \af{when} the message is taking too long to arrive and the append operation is taking too long \af{to complete, from the case when} the other client has crashed. Hence, algorithm $\mathcal{A}_{DAsync}$, under certain conditions, is {\em non-terminating}.

Furthermore, it is not difficult to see that if both clients fail, neither
algorithm $\mathcal{A}_{DAsync}$ nor algorithm$\mathcal{A}_{DSync}$ can solve the Collaborative AtomicAppends problem. For example, in the proof of Theorem~\ref{thm:possibleDirectSync}, in case (b), client $B$ could crash 
right after appending its own record (i.e., $r_B$ is appended, but $r_A$ is not). This violates safety. In fact, we now show that if both clients can crash, the problem
is not solvable, even under synchrony.

\begin{theorem}
	\label{thm:impos2crashes} 
	When both clients can crash, the Collaborative 2-AtomicAppends problem cannot be solved WithDelegation, even in a synchronous system.
\end{theorem}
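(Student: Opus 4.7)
The plan is to proceed by contradiction: assume some algorithm $\mathcal{A}$ solves Collaborative $2$-AtomicAppends WithDelegation in a synchronous system when both clients may crash. The argument will exploit the irrevocability of DLO \act{append}s together with the fact that, in any (sequential) execution, the two required appends must be invoked at two distinct moments, so there is always a ``critical window'' during which exactly one record has already committed while the other has not yet been invoked.

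By the liveness clause of Definition~\ref{def:2AA} applied to the failure-free case, $\mathcal{A}$ must admit an execution $E_0$ without crashes in which both $r_A$ and $r_B$ are appended. Let $t_1<t_2$ be the times at which the first and second \act{append} operations are invoked in $E_0$, and let $C_2\in\{A,B\}$ be the client invoking the append at time $t_2$. First I would construct $E_1$ identical to $E_0$ up to time $t_2-\epsilon$ for an arbitrarily small $\epsilon>0$, at which moment $C_2$ crashes. By synchrony and indistinguishability up to $t_2-\epsilon$, the first append has already committed to its DLO in $E_1$, but $C_2$ never invokes the second append.

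Now consider the other client $\bar{C}_2$. If $\mathcal{A}$ does not prescribe, under the history observed in $E_1$, any further append action by $\bar{C}_2$ that would insert the missing record, then $E_1$ terminates with exactly one record appended, violating safety. Otherwise, let $t_3$ be the time at which $\bar{C}_2$ first invokes the missing append in $E_1$ (necessarily after timing out on $C_2$ and using delegation, since synchrony is what enables $\bar{C}_2$ to detect $C_2$'s crash in the first place). Then construct $E_2$ identical to $E_1$ up to time $t_3-\epsilon$, at which moment $\bar{C}_2$ also crashes. Both clients are now crashed, the first append has committed, but the second append is never invoked, so only one record is appended in $E_2$, again violating safety.

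The main obstacle the construction has to absorb is the case in which a single ``delegator'' client (having received the peer's record earlier) tries to perform both appends back-to-back, apparently collapsing the critical window. Synchrony does not actually help the algorithm here: a sequential client still invokes the two appends at two distinguishable moments, and the adversary is free to crash that client precisely between the two invocations, exactly as in the construction of $E_1$. Since there are only two clients, the two crashes used across $E_1$ and $E_2$ exhaust every candidate that could complete the pair, so the safety violation is unavoidable, contradicting the assumption that $\mathcal{A}$ solves the problem.
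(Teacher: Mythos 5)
Your proposal is correct and follows essentially the same route as the paper's proof: start from the crash-free execution guaranteed by liveness, locate the window between the two append invocations, and use indistinguishability to crash both clients inside that window so that exactly one record ends up appended, violating safety. Your two-stage construction ($E_1$ crashing the client about to issue the second append, then $E_2$ crashing the survivor just before it could delegate) is a slightly more explicit rendering of the paper's ``crash $B$ at time $t$, crash $A$ at time $t+1$ immediately after its append completes,'' and your use of well-formedness to ensure the first append has completed before the crash point addresses the same technicality the paper handles by placing the second crash at $t+1$.
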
 

\begin{proof}
Consider an algorithm ${\mathcal A}$ that clients can execute without deviating from their utility-driven protocol. Assume algorithm ${\mathcal A}$ solves the Collaborative 2-AtomicAppends problem in the model described.   
Let $E$ be an execution of algorithm ${\mathcal A}$ in which no client crashes. 
By liveness, both records $r_A$ and $r_B$ must be eventually appended. Consider the first record appended, say $r_A$ w.l.o.g., and the client that issued the append operation, say $A$ w.l.o.g.. Let us assume that $A$ issues $append(\mathrm{DLO}_A,r_A)$ at time $t$. Hence, $append(DLO_B,r_B)$ is issued at time no earlier than $t$, and $A$ cannot verify that the record $r_B$ is in the corresponding $DLO_B$ until time $t' > t$.
	
Now consider execution $E'$ of  algorithm ${\mathcal A}$ that is identical to $E$,  {\em up to time} $t$. Now at time $t$ client $B$ crashes, and hence it never issues $append(DLO_B,r_B)$. Since $A$ cannot differentiate until time $t$ this execution from $E$, it issues $append(DLO_A,r_A)$ at time $t$, appending $r_A$ to $DLO_A$. 
Then, at time $t+1$ (immediately after $append(DLO_A,r_A)$ completes) $A$ also crashes, and hence never issues $append(DLO_B,r_B)$.
Since $append(DLO_B,r_B)$ is never issued, record $r_B$ is never appended to $DLO_B$, violating safety.
Hence, we reach a contradiction, and algorithm~${\mathcal A}$ does not solve the Collaborative 2-AtomicAppends problem. 
\end{proof}

%
\remove{
\cg{\bf CG: Does the above also hold for the competitive AtomicAppends problem (seems like it)? If yes, does this contradict the results for atomic swaps presented in
the literature?}
\af{[CG: Yes, the proof applies to competitive AtomicAppends. Compared with atomic swap solutions with hash locks, those solutions also assume that the client will not crash in order to get the money. If one client gets the car and the other crashes, the latter never gets the money for the car. Maybe we can comment on this at this point.]} \cg{[CG: Yes, we can observe that this
	also works for competitive, which will smoothly bring us to the
	next section, in which, for \emph{both} problems some intermediary entity
	is needed (and hence the next section is for both versions.)]}\nn{[NN: Very nice. Is there a possibility that the two problems are comparable in
	terms of difficulty? Can the collaborative approach be weaker than the competitive approach, 
	explaining the inability of the competitiveness to work in this environments? It seems to me that 
	collaboration is weaker in the sense that introduces the willingness to help while competitive behavior sets 
	this parameter to 0. I am just thinking aloud, in my quest to see whether there is an obvious 
	way to reduce the results of one problem to the other! :) ]} \af{[Nice. You mean that maybe we can claim that any algorithm that solves the problem for the competitive model does so in the collaborative model as well? It would be nice to prove this!! Same with negative results: impossibility in collaborative implies it in competitive. Maybe we do not need to include it here, and leave it for a more complete version, though. What do you think?]}
}

\section{Crash-prone AtomicAppends with SDLO}
\label{sdlo}

Theorems~\ref{thm:competitiveNoFailures} and \ref{thm:impos2crashes} suggest the need to use some  external intermediary entity,
in order to solve {\em Competitive AtomicAppends}, even in the absence of crashes, and {\em Collaborative AtomicAppends}, in the case both
clients crash, respectively. This is the subject of this section.
\subsection{Smart DLO (SDLO)}
We enhance the MDLO with a special DLO, called {\em Smart DLO} (SDLO),
which is used by the clients to delegate the append of their records to the original MDLO. 
This SDLO is an extension of a DLO that supports a special ``atomic appends'' record of the form {\bf [client id, $\{$list of involved clients in the atomic append$\}$, record of client]}. When two clients wish to perform an atomic append involving their records
and their corresponding DLOs, then they \emph{both} need to append such an atomic appends
record in the SDLO; this is like requesting the atomic append service from the
SDLO. Once \emph{both} records are appended in the SDLO, then the SDLO appends each record to the corresponding DLO. A pseudocode of this mechanism, together with the client requests, called algorithm ${\mathcal A}_{SDLO}$ is given as Code~\ref{alg:DSDLO}.
 
\begin{algorithm}[h]
	\caption{\small ${\mathcal A}_{SDLO}$: SDLO mechanism and requests from clients $A$ and $B$; SDLO code only for atomic appends}
	\label{alg:DSDLO}
	\begin{algorithmic}[1]
		\State\textbf{Client $A$:}
		\Indent
		\State   \textbf{append}$(SDLO,[A,\{A,B\},r_A])$
		\State   \textbf{upon} receipt {\rm AppendAck} from SDLO \textbf{return} 
		\EndIndent
		\Statex
		\State\textbf{Client $B$:}
		\Indent
		\State   \textbf{append}$(SDLO,[B,\{A,B\},r_B])$
		\State   \textbf{upon} receipt {\rm AppendAck} from SDLO \textbf{return} 
		\EndIndent
		\Statex
		\State \textbf{SDLO:}
		\Indent
		\State \textbf{Init:} ${S}\gets \emptyset$
	    \Function{SDLO.\act{append}}{$[X,\{X,\bar{X}\},r_X]$}
	    \State $S \gets S~\extends~[X,\{X,\bar{X}\},r_X]$
	    \If {$[\bar{X},\{X,\bar{X}\},r_{\bar{X}}]\in {S}$}
	    \State \textbf{append}$(DLO_X,r_X)$
	    \State \textbf{append}$(DLO_{\bar{X}},r_{\bar{X}})$
	    \EndIf
	    \State \textbf{return} AppendAck 
	    \EndFunction
		\EndIndent
	\end{algorithmic}
\end{algorithm}

So essentially the SDLO ``collects'' the append requests involved in the AtomicAppends instance and 
ultimately executes them, by performing individual appends to the corresponding DLOs.
Observe that the SDLO does not access the state of \DLOA and \DLOB, but it needs to be able to perform append operations to both of them. In other words, delegation is passed to the SDLO. Also observe that the SDLO returns ack to a client's request, once their atomic appends request is appended in the SDLO,  and not when
the actual atomic append takes place (see related discussion below).

\remove{
the mechanism in SDLO to implement atomic appends for two clients, $A$ and $B$, with records $r_A$ and $r_B$ respectively, that have to be appended to $DLO_A$ and $DLO_B$, respectively, would be as follows:
\begin{itemize}
\item 
Client $A$ invokes a function of a smart contract $AtomicAppends(a_A, h_B)$, where $a_A=append(DLO_A,r_A)$ and $h_B=hash(append(DLO_B,r_B))$.
\item 
Client $B$ invokes a function of a smart contract $AtomicAppends(a_B, h_A)$, where $a_B=append(DLO_B,r_B)$ and $h_A=hash(append(DLO_A,r_A))$.
\item
The function $AtomicAppends()$ detects when matching invocations have been appended to the SDLO, and it triggers the operations $append(DLO_A,r_A)$ and $append(DLO_B,r_B)$ to $DLO_A$ and $DLO_B$ respectively. 
%
\end{itemize}
\af{[I would simplify the above description, removing hash functions. I think the record structure from the slides is simpler and hence better: $(A, \{A,B\}, r_A)$ ]}\cg{[Agreed.]}
}


\subsection{Solving AtomicAppends with SDLO}
It is not difficult to observe that algorithm ${\mathcal A}_{SDLO}$ can 
solve the AtomicAppends problem in both utility models, even \emph{in asynchrony}, and even if \emph{both clients crash}. 
Note that $SDLO$, being a distributed ledger by itself, is reliable
despite the fact that
some servers implementing it may fail (more below). We show: 

\begin{theorem}
	\label{thm:AA-SDLO}
	Algorithm ${\mathcal A}_{SDLO}$ solves both the Collaborative and Competitive 2-AtomicAppends problems in an asynchronous setting, even if both clients may crash.
\end{theorem}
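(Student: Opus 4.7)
The plan is to establish the theorem by verifying safety, liveness, and then incentive compatibility under both utility models, relying crucially on the reliability of the SDLO as a distributed ledger.

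For safety, I would argue by direct inspection of Code~\ref{alg:DSDLO}: the SDLO issues the calls $\act{append}(DLO_X,r_X)$ and $\act{append}(DLO_{\bar{X}},r_{\bar{X}})$ only inside a single if-branch, whose guard requires both atomic-append request records to already be present in the internal sequence $S$. If either client's request never reaches the SDLO (e.g., because the client crashed before invoking $\act{append}(SDLO,\ldots)$), the guard never fires and neither $DLO_A$ nor $DLO_B$ is touched; if both requests are present in $S$, the guard fires exactly once (when the second request is appended) and triggers both individual appends, which complete by reliability of $DLO_A$ and $DLO_B$. Hence either both records are appended or neither is. For liveness, assuming neither client crashes, both invoke $\act{append}(SDLO,\ldots)$; the SDLO reliably appends both requests to $S$; the guard then fires and both individual appends complete.

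The more delicate part, and what I expect to be the main obstacle, is the incentive-compatibility argument for rational, risk-averse clients --- in particular under the competitive utility model, where a client might naively seem to have an incentive to withhold her request in the hope that the other submits alone. The plan is to show that, for each client $X$, submitting her request weakly dominates refraining, under both utility models. The key observation is that, by the safety property just established, the outcome ``only $X$ appends'' is structurally impossible under $\mathcal{A}_{SDLO}$. Hence, if $X$ submits, the only possible outcomes are ``both append'' (when $\bar{X}$ also submits) or ``none appends'' (when $\bar{X}$ does not); and if $X$ refrains, the forced outcome is ``none appends''. Consulting Table~\ref{tbl:utilmodels}, in the collaborative model ``both append'' is top-ranked and ``none appends'' is second, while in the competitive model ``both append'' is second and ``none appends'' is third. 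In both cases submitting weakly dominates refraining --- it may improve $X$'s utility and never strictly worsens it. Crucially, since the worst-case outcome of submitting matches the only possible outcome of refraining, risk-aversion does not overturn this dominance, and no scenario under $\mathcal{A}_{SDLO}$ gives a submitter strictly worse utility than a non-submitter. This precludes any profitable deviation in both utility models, even when both clients are crash-prone, which yields the theorem.
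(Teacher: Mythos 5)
Your proof is correct and rests on the same underlying mechanism as the paper's: the SDLO's guard fires only when both matching request records are present, so the outcome is structurally both-or-none, and the SDLO's reliability (as a distributed ledger) carries the appends through regardless of client crashes. The organization differs, though. The paper decomposes by the number of crashes (zero, one, two) and within each case checks which of ``both'' or ``none'' results, then appends a short informal remark that clients have incentive to submit because withholding yields only their second-best (collaborative) or third-best (competitive) outcome. You instead separate safety and liveness as code-level invariants and then give an explicit weak-dominance argument for incentive compatibility: since ``only $X$ appends'' and ``only $\bar{X}$ appends'' are both unreachable, submitting can only yield \{both, none\} while refraining forces ``none'', and the worst case of submitting coincides with the certain outcome of refraining, so risk-aversion cannot favor deviation. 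This last point is a genuine refinement: the paper merely asserts that ``the reliability of the SDLO nullifies the uncertainty of the clients,'' which glosses over the residual uncertainty about whether the \emph{other} client submits; your observation that this uncertainty carries no downside risk is the precise reason a risk-averse client still follows the protocol. Both routes reach the same conclusion; yours buys a cleaner justification of the game-theoretic step, while the paper's crash-case enumeration makes the ``none case'' versus ``both case'' outcomes more immediately visible.
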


\begin{proof}
	We consider three cases:
	\begin{enumerate}
		\item If no client crashes, then algorithm ${\mathcal A}_{SDLO}$ trivially solves the problem: Both clients invoke the
		atomic appends request to the SDLO, these operations complete, and the SDLO eventually triggers the two corresponding appends of records
		$r_A$ and $r_B$ to $DLO_A$ and $DLO_B$, respectively (both case).
		\item At most one client crashes, say client $A$. Here we have two cases:
		\begin{enumerate}
			\label{item:sdlo}
			\item Record $[A,\{A,B\},r_A]$ is never appended to the SDLO. Since the SDLO will never contain both matching records, it will never append any of the records $r_A$ and $r_B$ (none case).
			\item Record $[A,\{A,B\},r_A]$ is appended to the SDLO. Since record $[B,\{A,B\},r_B]$ will eventually be appended by $B$ in the SDLO,
			it will proceed with the corresponding appends of records $r_A$ and $r_B$ (both case). 
		\end{enumerate}
        \item Both clients crash. If one of the two clients, say $A$, crashes before appending $[A,\{A,B\},r_A]$ to the SDLO,
        then none of the appends of records
        $r_A$ and $r_B$ will take place in the corresponding DLOs (none case). However, if both clients crash after 
        they have appended the matching atomic appends records, then both records
        $r_A$ and $r_B$ will be appended by the SDLO (both case).  
	\end{enumerate}
   Observe that the above hold for both utility models. In Competitive AtomicAppends, if a
   client does not invoke its atomic append request to the SDLO, it knows that the SDLO will not proceed to append
   the other client's record. This leaves the clients with their second best utility (see Table~\ref{tbl:utilmodels}),
   and hence, both have incentive to invoke the atomic append requests to the SDLO. The reliability of the 
   SDLO nullifies the uncertainty of the clients, and hence they will follow algorithm ${\mathcal A}_{SDLO}$. 
\end{proof}


Observe that algorithm ${\mathcal A}_{SDLO}$ can easily be extended to solve the $k$-$AtomicAppend$ problem,
for any $k\ge 2$ (all clients submit their atomic append request to the SDLO, and then the SDLO performs the corresponding appends),
leading to the following corollary:

\begin{corollary}
	\label{cor:kAA-SDLO}
	Both the Collaborative and Competitive $k$-AtomicAppends problems can be solved with the use of SDLO in the asynchronous setting, even if all $k$ clients may crash.
\end{corollary}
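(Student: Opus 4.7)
The plan is to generalize Algorithm ${\mathcal A}_{SDLO}$ by having each client $X_i$, for $i=1,\ldots,k$, issue an atomic-append request of the form $[X_i,\{X_1,\ldots,X_k\},r_{X_i}]$ to the SDLO, and by modifying the SDLO's acceptance condition so that it only triggers the individual $\act{append}(DLO_{X_i},r_{X_i})$ operations once it has collected \emph{all} $k$ matching records (i.e., records with the same participant set $\{X_1,\ldots,X_k\}$, one from each $X_i$). The reliability of the SDLO as a distributed ledger guarantees that, once this condition is met, each of the $k$ subsequent appends to the individual DLOs eventually completes.

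Next I would argue safety and liveness by a case analysis analogous to the one in Theorem~\ref{thm:AA-SDLO}. Safety: if at least one client $X_i$ fails to append its record to the SDLO (either because it crashes beforehand or otherwise), then the SDLO's trigger condition is never satisfied, so \emph{none} of the $r_{X_j}$ is appended to any $DLO_{X_j}$; if all $k$ matching records reach the SDLO, then the SDLO appends \emph{all} $r_{X_j}$ to the corresponding $DLO_{X_j}$'s. Liveness: whenever no client crashes, each client's invocation to the SDLO completes (the SDLO is reliable), so all $k$ matching records end up in the SDLO and all $k$ individual appends follow. Crash scenarios of arbitrary subsets of the $k$ clients collapse to the two safe outcomes (all or none), exactly as in the $k=2$ case.

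The incentive argument extends directly from Theorem~\ref{thm:AA-SDLO}. In the collaborative model, each $X_i$ strictly prefers the all-append outcome to the none-append outcome and so is willing to submit her request to the SDLO. In the competitive model, a rational risk-averse $X_i$ knows that withholding her request guarantees the none-append outcome at the SDLO, which is worse than the both-append (i.e., all-append) outcome; the reliability of the SDLO eliminates the uncertainty that, in the direct-append setting, prevented her from moving first. Hence following ${\mathcal A}_{SDLO}$ weakly dominates any deviation for every client.

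The main obstacle I expect is purely in presentation rather than in mathematics: stating the utility assumption for the $k$-client competitive setting precisely enough that the dominance argument goes through uniformly for all $k$. Concretely, one needs to assume a natural generalization of the two-client competitive table, namely that for every client $X_i$ the all-append outcome is preferred to the none-append outcome, and that no partial outcome in which $r_{X_i}$ is appended while some $r_{X_j}$ is not can be forced by any unilateral deviation (which is precisely what the SDLO's ``all-or-none'' trigger rule ensures). Once this is fixed, the rest of the proof is a straightforward case analysis mirroring Theorem~\ref{thm:AA-SDLO}.
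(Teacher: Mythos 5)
Your proposal is correct and follows essentially the same route as the paper, which justifies the corollary only by the informal remark that all $k$ clients submit their atomic-append requests to the SDLO and the SDLO triggers the individual appends once all matching records are present. Your fleshed-out case analysis and the caveat about stating the $k$-client competitive utility model precisely go slightly beyond what the paper writes down, but introduce no new idea and no divergence from its argument.
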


\noindent {\bf\em Remark:} As we discussed in the case \ref{item:sdlo} of the proof of Theorem~\ref{thm:AA-SDLO}, if client $A$ crashes and record $[A,\{A,B\},r_A]$ is never appended to the SDLO, none of the records $r_A$ and $r_B$ will be appended. Now, observe that client $B$ can proceed to perform other operations once it has appended $[B,\{A,B\},r_B]$ (despite the fact that $r_B$ has not been appended to \DLOB,
as it is up to the SDLO to do so). Since clients do not need to wait forever
for any operation, 
algorithm ${\mathcal A}_{SDLO}$ is terminating with respect to the clients. Moreover, the SDLO also terminates the processing of all the operations, as long as the appends in other DLOs terminate.\medskip 

\noindent{\bf Implementation issues.} In the above mechanism and theorem, we treat the SDLO as one entity. Since, however,
the SDLO is a distributed ledger implemented by collaborating servers, there are some low-level implementation
details that need to be discussed. If we assume that the servers implementing the SDLO are prone to only
crash faults and that the SDLO is implemented using an Atomic Broadcast service, as described in~\cite{FGKN_NETYS18},
then algorithm ${\mathcal A}_{SDLO}$ can be implemented as follows: Clients $A$
and $B$ submit the atomic append requests to all servers implementing the SDLO. 
Once a server appends an atomic append request record to its local copy of the ledger,
it checks if the matching record is already in the ledger.
If this is the case, it issues the two corresponding append operations for records $r_A$ and $r_B$.
If up to $f$ servers may crash, then it suffices that $f+1$ servers, in total, perform these append operations.
Given that each record is appended to a DLO at most once (the append operations are idempotent; if a record is already appended, it will not
be appended again), it follows that both records are appended in the corresponding DLOs.

\remove{
\noindent ----------------------------------------\\
\cg{\bf CG: I suggest the discussion below to push it to the Conclusion section, elaborate a bit more on in and
	argue that this can be handled by VDLOs (and leave it as future work).}
\af{[I agree]}\nn{[NN: Agree.]}\af{[I think the discussion below is important when we assume other models of clients, like Byzantine. Can be mentioned as future work.]}
If we cannot prevent the clients to append in the MDLO while the Atomic Appends is happening, but still the MDLO is formed by DLOs (not VDLOs), the above mechanism still provides Atomic Append the way we define it. However, obviously, we cannot prevent a client $A$ or $B$ to append records before $r_A$ or $r_B$ in the corresponding DLO after it started executing the Atomic Append. This could affect the validity of the records appended.


\subsection{Dealing with MVDLO}

\cg{\bf CG: I suggest we nuke this section/discussion for this
	version of the paper. We could include some discussion in the conclusion
	part and future work. We could include it in a magazine or
	journal version of the work. What do you think?}

\af{[Completely agree. The above material is enough for this version. We may want to complete it for a conference and journal, but for now I would stop here.]}

\nn{[NN: Fine by me as well.]}

In case of MVDLOs, an append operation might not pass the validity test and hence return a NACK. That means we have to adapt the definition of AttomicAppends. Before we do so, we introduce the notion of a record being {\em valid}:

\begin{definition}[Valid record]
	A record $r$ is \emph{valid} at time $t$ in VDLO $D$ if $Valid_D(D.S | r)\!=\!TRUE$ at time $t$.
\end{definition}

We assume that if a record $r_A$ from client $A$ is a valid record in $VDLO_A$, it can only be made invalid by a record by the same client $A$.

\begin{definition}[Coordinated appends in MVDLOs]
	Given two clients, $A$ and $B$, with records $r_A$ and $r_B$ respectively, which have to be appended to $VDLO_A$ and $VDLO_B$, respectively. 
	We say that these two append operations are atomic, when clients $A$ and $B$ execute protocol $P$ at time $T$ such that:
	\begin{itemize}
		\item Either both records are appended to their respective VDLOs or none (safety).
		\item If both $A$ and $B$ do not fail, follow protocol $P$, at any time $t \geq T$: $r_A \notin VDLO_A.S$, $r_A$ is valid in $VDLO_A$, and at any time $t \geq T$: $r_B \notin VDLO_B.S$,  $r_B$ is valid in $VDLO_B$,
		then both records are eventually appended to their respective VDLOs (liveness).
	\end{itemize}
\end{definition}

\paragraph{No validation by clients nor SDLO}
Now consider the SDLO described above. 
Note that we allow the clients to append the $VDLO_A$ and $VDLO_B$ while the Atomic Appends is happening.

\begin{theorem}
Assume that neither SDLO nor the clients can validate records $r_A$, $r_B$.
The SDLO is not enough to guarantee Atomic Appends even if clients do not fail and the system is synchronous.
\end{theorem}
\begin{proof}
Consider that there are two clients, $A$ and $B$, with records $r_A$ and $r_B$ respectively, that have to be appended to $VDLO_A$ and $VDLO_B$, respectively.

Scenario 1: Both clients are correct and both records, $r_A$, $r_B$, are valid. By liveness, both records are eventually appended to their respective VDLOs. Let us assume the first record appended is $r_A$ and that $a_A=append(VDLO_A,r_A)$ was issued at time $t$. Hence, the $append(DLO_B,r_B)$ was issued at time no earlier than $t$, and the issuer of $a_A$ cannot verify if the record $r_B$ is in the corresponding $DLO_B$ until time $t' > t$.

Scenario 2: Same execution, except that record $r_B$ is invalid. Since the issuer of $a_A$ cannot differentiate this from Scenario 1, it issues $append(VDLO_A,r_A)$ at time $t$, appending $r_A$ to $VDLO_A$. However, the call $append(VDLO_B,r_B)$ replies with NAK, and record $r_B$ is not appended to $VDLO_B$, violating safety.
\end{proof}

\paragraph{Validation by the SDLO}
Let us now \textbf{assume that the SDLO can validate,} i.e., it can issue a get operations on any VDLO, get the sequence $S$ of records, and run $Valid(S | r)$ to check wether appending record $r$ to the VDLO would be valid. Still in this case Atomic Appends is not guaranteed.

\begin{theorem}
Assume that SDLO can validate.
This is not enough to guarantee Atomic Appends even if clients do not fail and the system is synchronous.
\end{theorem}
\begin{proof}
Consider that there are two clients, $A$ and $B$, with records $r_A$ and $r_B$ respectively, that have to be appended to $VDLO_A$ and $VDLO_B$, respectively.
Client $B$ has a record $r'_B$ that makes $r_B$ invalid in $VDLO_B$ if appended before it.

Scenario 1: Both clients are correct and both records, $r_A$, $r_B$, are valid. By liveness, both records are eventually appended to their respective VDLOs. Let us assume the first record appended is $r_A$ and that $a_A=append(VDLO_A,r_A)$ was issued and completed at time $t$. Let the $a_B=append(DLO_B,r_B)$ to be issued at time $t' \geq t$.

Scenario 2: Same execution until time $t'$. Hence, since until this time $t'$ this scenario cannot be differentiated form Scenario 1, operation $a_A$ is issued at time $t$ and completes correctly. Also, for the same reason, operation $a_B$ is issued at time $t'$. However, $B$ issues $a'_B=append(VDLO_B,r'_B)$ also at time $t'$, that completes correctly. Then, $a_B$ becomes invalid and $r_B$ is not appended to $VDLO_B$, violating safety.
\end{proof}

\paragraph{Locking clients at the VDLO}
Let us now assume that the SDLO can append a special record $L_A=RLock(A,r_A)$, that must be signed by client $A$. This record is what we call a \emph{record lock}. The record lock can be appended to $VDLO_A$, and a server of the VDLO processes operation $append(RLock(A,r_A))$ as follows at the point it is going to be appended in the sequence $S$ of records:
\begin{itemize}
\item If $Valid(S | RLock(A,r_A) ) = TRUE$, then the record $RLock(A,r_A)$ is appended and the append operation returns a ACK.
\item Otherwise, the record is not appended and the append operation returns NACK.
\end{itemize}
In additions, if record $RLock(A,r_A)$ is appended, any later attempt of appending a record from client $A$ different from $r_A$ will be rejected (i.e., the validity check will fail and the append will get NACK as reply) until $r_A$ itself is appended.

Note that if the record lock was already in the sequence $S$, it is not included again and the append operation returns DUP to notify the attempt to append a duplicated record.

The following protocol seems to work. 

Let us start by describing the \textbf{actions of the clients:}
\begin{enumerate}
\item Clients $A$ and $B$ agree to perform an atomic appends of records $r_A$ and $r_B$, respectively, to $VDLO_A$ and $VDLO_B$, respectively.
\item Client $A$ generates a record $aa_A=AtomicAppends(A, VDLO_A, r_A, h_B, L_A)$ to be appended to SDLO, such that $h_B=hash(VDLO_B,r_B)$, $L_A=sign_A(A,r_A)$. Client $B$ generates an analogous record $aa_B=AtomicAppends(B, VDLO_B, r_B, h_B, L_B)$.
\item
Clients $A$ and $B$ issue append operations $append(SDLO,aa_A)$ and $append(SDLO,aa_B)$, respectively.
\end{enumerate}

Once \emph{both} records $aa_A$ and $aa_B$ have been appended, , \textbf{the SDLO does\footnote{For the purpose of clarity we treat SDLO as a computational entity. Obviously, the described actions are run by the servers that implement the SDLO.} the following:}
\begin{enumerate}
\item
SDLO issues append operations $append(VDLO_A, A, r_A, RLock(A,r_A), L_A)$ and $append(RLock(B,r_B)$

\item Clients provide records and locks.
\begin{itemize}
\item Client $A$ appends in SDLO a record $AtomicAppends(a_A, L_A, U_A, h_B)$, where $a_A=append(VDLO_A,r_A)$, $h_B=hash(append(VDLO_B,r_B))$, $L_A=RLock(A,r_A)$, $U_A=RUnlock(A,r_A)$.
\item Simmetrically for client $B$.
\end{itemize}
\item SDLO appends all these records and acts only when they all are received. I.e., any server of the SDLO that adds to the ledger a record $AtomicAppends(a_A, L_A, U_A, h_B)$ checks if a matching record was already appended in the SDLO.
\begin{itemize}
\item If the matching record is not in the SDLO it does nothing.
\item If the matching record is in the SDLO already if proceeds with the process as follows:
\begin{itemize}
\item The server issues append operations $append(VDLO_A, RLock(A,r_A)$ and $append(VDLO_B, RLock(B,r_B)$, and wait until they complete. Observe that if several servers append the same record lock only one is actually appended by the properties of DLOs and VDLOs that they do not append duplicated records.
\item The server issues get operations to get the state of the VDLOs. Since the VDLOs are linearizable, the sequence of records it gets contains the record locks.
\item .....
\end{itemize}

\end{itemize}
 
\item SDLO appends all the locks in the VDLOs.
\item When all locks have been appended it validates all records
\item If they are all OK, if appends them all.
\end{enumerate}
}


\section{Conclusion}
\label{conclusion}
We have introduced the AtomicAppends problem, where given two (or more \af{in general}) clients,
each needs to append a record to a corresponding DLO, and do so atomically 
with respect to each other: either both records are appended or none.
We have considered crash-prone, rational and risk-averse clients based on two different utility
models, {\it Collaborative} and {\it Competitive}, and studied the solvability of the
problem under synchrony/asynchrony, different client append models
and failure \cg{scenarios}. 
Table~\ref{tbl:summary} gives an overview
of our results (for two clients): if the problem can be solved, then we list the
\textcolor{blue}{algorithm} we developed, otherwise we use the symbol ``\textcolor{red}{\xmark}''.

\begin{table}[h]
	\begin{small}
		\centering
\begin{tabular}{cc|c|c|c|c|c|c|}
	\cline{3-8}
	\multicolumn{1}{l}{} & \multicolumn{1}{l|}{} & \multicolumn{3}{c|}{\textbf{Synchrony}} & \multicolumn{3}{c|}{\textbf{Asynchrony}} \\ \cline{3-8} 
	\multicolumn{1}{l}{\multirow{-2}{*}{\textbf{}}} & \multicolumn{1}{l|}{} & {ND} & {WD} & {SDLO} & {ND} & {WD} & {SDLO} \\ \hline
	\multicolumn{1}{|c|}{} & \textit{no crashes} & {\color{blue} simple} & {\color{blue} } & {\color{blue} } & {\color{blue} simple} & {\color{blue} } & {\color{blue} } \\ \cline{2-3} \cline{6-6}
	\multicolumn{1}{|c|}{} & \textit{up to one} & {\color{red} } & \multirow{-2}{*}{{\color{blue} ${\mathcal A}_{DSync}$}} & {\color{blue} } & {\color{red} } & \multirow{-2}{*}{{\color{blue} ${\mathcal A}_{DAsync}^{(\star)}$}} & {\color{blue} } \\ \cline{2-2} \cline{4-4} \cline{7-7}
	\multicolumn{1}{|c|}{\multirow{-3}{*}{\textbf{Collaborative}}} & \textit{both} & \multirow{-2}{*}{{\color{red} \textbf{\xmark}}} & {\color{red} \textbf{\xmark}} & {\color{blue} } & \multirow{-2}{*}{{\color{red} \textbf{\xmark}}} & {\color{red} \textbf{\xmark}} & {\color{blue} } \\ \cline{1-4} \cline{6-7}
	\multicolumn{1}{|c|}{} & \textit{no crashes} & \multicolumn{2}{c|}{{\color{red} }} & {\color{blue} } & \multicolumn{2}{c|}{{\color{red} }} & {\color{blue} } \\ \cline{2-2}
	\multicolumn{1}{|c|}{} & \textit{up to one} & \multicolumn{2}{c|}{{\color{red} }} & {\color{blue} } & \multicolumn{2}{c|}{{\color{red} }} & {\color{blue} } \\ \cline{2-2}
	\multicolumn{1}{|c|}{\multirow{-3}{*}{\textbf{Competitive}}} & \textit{both} & \multicolumn{2}{c|}{\multirow{-3}{*}{{\color{red} \textbf{\xmark}}}} & \multirow{-6}{*}{{\color{blue} ${\mathcal A}_{SDLO}$}} & \multicolumn{2}{c|}{\multirow{-3}{*}{{\color{red} \textbf{\xmark}}}} & \multirow{-6}{*}{{\color{blue} ${\mathcal A}_{SDLO}$}} \\ \hline
	\multicolumn{8}{|l|}{$^{(\star)}$ might not terminate} \\ \hline
\end{tabular}
		\caption{\small Overview of the results. ND stands for NoDelegation and WD for WithDelegation.}
		\label{tbl:summary}
	\end{small}
\end{table}

Our results demonstrate a clear separation on the solvability of the problem
based on the utility model assumed when appends are done directly by the clients.
When appends are done using a special type of a DLO, which we call {\it Smart} DLO (SDLO), 
then the problem is solved in both utility models, even in asynchrony and even if both
clients may crash. 

Our investigation of AtomicAppends did not look into the semantics of the records being
appended. 
Consider, for example, the following scenario. Say that clients $A$ and $B$
initiate an atomic append request with records $r_A$ and $r_B$, respectively. While the atomic append request is 
being processed, say by the SDLO, client $B$ appends a record $r^\prime$ directly to $DLO_B$. 
It could be the case that the content of record $r^\prime$ is such, that it would affect record $r_B$. 
For example, say that the atomic
append involves the exchange of a deed of a car with bitcoins; record $r_A$ contains the transfer
of the deed and $r_B$ the transfer of bitcoins. If $r^\prime$ involves the withdrawal of bitcoins
from the wallet of client $B$, and this is appended first, then it could be the case that
the wallet no longer contains sufficient bitcoins to carry out the atomic appends request.
Even if we enforce the clients to perform all appends trough the SDLO (which practically speaking
is not desirable), still we need to {\em validate} records. Therefore, to tackle such cases, we will need to 
consider {\it validated} DLOs (VDLOs)~\cite{FGKN_NETYS18}. This is a challenging problem, especially in
asynchronous settings.




\paragraph{Acknowledgements.} We would like to thank Kishori Konwar, Michel Raynal, and Gregory Chockler for
insightful discussions.


\end{document}